\author{Cyril Banderier\affiliationmark{1}\thanks{\url{http://lipn.fr/~banderier}} 
  \and Jean-Luc Baril\affiliationmark{2}\thanks{\url{http://jl.baril.u-bourgogne.fr/}}
  \and C\'eline Moreira Dos Santos\affiliationmark{2}\thanks{\url{http://le2i.cnrs.fr/-Celine-Moreira-Dos-Santos-}}}
\title[Right-jumps and pattern avoiding permutations]{Right-jumps \& pattern avoiding permutations}
\affiliation{
  LIPN UMR-CNRS 7030, Universit\'e de Paris Nord, France.\\
LE2I UMR-CNRS 6306, Universit\'e de Bourgogne,  France.
}
\definecolor{darkgreen}{rgb}{0,0.4,0}
\definecolor{darkblue}{rgb}{0.0, 0.0, 0.55}
\definecolor{BrickRed}{rgb}{0.65,0.08,0}
\newtheorem{Definition}{Definition}
\newtheorem{lem}{Lemma}
\newtheorem{cor}{Corollary}
\newtheorem{thm}{Theorem}
\def \Avoid{\operatorname{Avoid}}
\def \B{{\mathcal{B}}}
\def \S{{\mathcal{S}}}
\renewcommand{\mod}{\operatorname{mod}}
\newcommand{\interval}[1]{\llbracket #1 \rrbracket} 
\newcommand{\oeis}[1]{\text{\href{https://oeis.org/#1}{{\small \tt OEIS #1}}}} 
\newcommand{\OEIS}[1]{\text{\href{https://oeis.org/#1}{{\small \tt (OEIS #1)}}}} 
\begin{document}
\maketitle

\begin{abstract}
We study the iteration of the process of moving values to the right in permutations.
We prove that the set of permutations obtained in this model after a given number of iterations from the identity is a class of pattern avoiding permutations.
We characterize the elements of the basis of this class and enumerate it by giving their bivariate exponential generating function:
we achieve this via a catalytic variable, the number of left-to-right maxima. We show that this generating function is a D-finite function satisfying a
differential equation of order~2. We give some congruence properties for the coefficients of this generating function,
and  show that their asymptotics involves a rather unusual algebraic exponent (the golden ratio $(1+\sqrt 5)/2$) and some closed-form constants.
We end by proving a limit law: a forbidden pattern of length $n$ has typically $(\ln n) /\sqrt{5}$ left-to-right maxima, with Gaussian fluctuations.
\end{abstract}

\keywords{Permutation pattern, left-to-right maximum, insertion sort, generating function, analytic combinatorics, D-finite function, supercongruence}

\section{Introduction}

In computer science, many algorithms related to sorting a permutation have been analysed
and shown to have behaviours linked to nice combinatorial properties~(see e.g.~\cite{Knuth}).
Their complexity can be analysed in terms of memory needed, or number of key operations
(like comparisons or pointer swaps). An important family of algorithms, like the so-called insertion algorithms, or in situ permutations,
are quite efficient in terms of the number of pointer swaps (but are not the fastest ones in terms of comparisons).
Due to this higher cost, they have been much less studied than the faster stack sorting algorithms.
Like for the stack algorithms, instead of seeing them as an input/output pair, we can see them like a {\em process}: input and set of intermediate steps.
This opens a full realm of questions on such processes, and they often lead to nice links with other parts of mathematics
(like the link between trees, birth and death processes, random walks in probability theory, or permutations and Young tableaux in algebraic combinatorics).
Our article will investigate a link between a sorting algorithm, patterns in permutations, and their asymptotics counterparts.

\bigskip

Another motivation to analyse such processes comes from the field of bioinformatics. Indeed, in genomics, a crucial part of study is to estimate the similarity of two genomes.
This consists in finding the length of a shortest path of evolutionary mutations that transforms one genome into another.
Usually, the main operations used in the rearrangement of a genome are of three different types: substitutions (one gene is replaced with another), insertions (a gene is added)
and deletions (a gene is removed). For instance, we refer to~\cite{Jon,Mei} for an explanation of these operations, and the notions of transposons or jumping genes.

As the problem is too hard in full generality, many simpler mathematical models of the genome are used  (see~\cite{Wat}): one of them is using
permutations of $\{1,2,\ldots, n\}$ where each gene is assigned a number.
Following the idea of transposition mutations (see~\cite{Jon}), our motivation is to find some combinatorial properties in terms of pattern
avoiding permutations whenever one element is deleted and inserted in a position to its right.
This operation will be called a right-jump.

\begin{figure}[h]
		\begin{center}
			\begin{tikzpicture}[scale=0.4]
				\draw (1.5cm,0cm) node { 1};
				\draw (2.5cm,0cm) node { 2};
				\draw[color=green] (3.5cm,0cm) node { 3};
				\draw (4.5cm,0cm) node {4};
				\draw (5.5cm,0cm) node {5};
				\draw (6.5cm,0cm) node { 6};
				\draw (7.5cm,0cm) node { 7};
				\draw[dotted] (3cm,-0.5cm) -- (4cm,-0.5cm) -- (4cm,0.5cm) -- (3cm,0.5cm) -- (3cm,-0.5cm);
				\draw[dotted,->] (3.5cm,0.5cm) -- (3.5cm,1cm) -- (7cm,1cm) -- (7cm,0.5cm);
				\draw (9.5cm,0cm) node{{\Large $\rightsquigarrow$}};
				\draw (11.5cm,0cm) node {1};
				\draw (12.5cm,0cm) node {2};
				\draw (13.5cm,0cm) node { 4};
				\draw (14.5cm,0cm) node { 5};
				\draw (15.5cm,0cm) node { 6};
				\draw[color=green] (16.5cm,0cm) node { 3};
				\draw (17.5cm,0cm) node { 7};
 \end{tikzpicture}
		\end{center}
	\caption{\label{fig1}A right-jump in the permutation $\sigma=1234567$.
In this article, we investigate the structure of permutations obtained after several iterations of such right-jumps.
}
	 \end{figure}
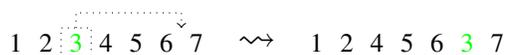
\bigskip

This operation is a variant of genome duplication, which consists of copying a part of the original genome inserted into itself, followed by the loss of one copy of each of the duplicated genes. In particular, it is comparable to the whole duplication-random loss model studied in~\cite{Chau}.
Although there are many connections between these models, it is surprising that the behaviour of their combinatorial properties depends on different parameters:
 the right-jump model reveals some links with {\it left-to-right maximum} statistics~(see \cite{JLB2,Bon}), while the whole duplication-random loss model reveals links with {\it descent} statistics (see~\cite{BV,BF,BP,BR,Chau,MY}).

 In the literature, such right-jumps in permutations are also found in the domain of sorting theory.
 Indeed, it corresponds (modulo a mirror symmetry) to the {\it insertion-sorting algorithms} on permutations (see~\cite{Knuth}).
Since the seminal work of Knuth on this subject, many articles related to sorting with a stack exhibit links with pattern avoiding permutations.
In contrast, for insertion sorting algorithms (also the subject of a vivid literature), only one study exhibits links with pattern avoiding permutations. 
Indeed, in his thesis, \cite{Mag} proves that the set of permutations that can be sorted with one step of the insertion-sorting operator
is the class of permutations avoiding the three patterns $321$, $312$ and $2143$.

\smallskip

{\bf Plan of the article.}
In Section 2, we recall some basic facts on permutations patterns.
In Section 3, we generalize the result of Magn{\'u}sson by studying the iteration of right-jumps in terms of pattern avoiding permutations: we prove that the set of permutations obtained from the identity after a given number of right-jumps is the class of permutations avoiding some patterns, which we characterize.
In Section 4, we enumerate these forbidden patterns by giving their bivariate exponential generating function (involving an additional parameter: the number of left-to-right maxima), and we give the corresponding asymptotics and limit law.
We also give some modular congruences for our main enumeration sequence.
In Section~5, we conclude with several possible extensions of this work.

\section{Patterns in permutations}\label{sec2}

In this section, we give some classical definitions and properties on patterns in permutations.
For any permutation $\sigma \in \S_n$ (the set of permutations of length $n$),
the {\it graphical representation} of $\sigma=\sigma_1\sigma_2\dots\sigma_n$  is the set of points in the plane at coordinates $(i,\sigma_i)$ for $i\in \interval{n}$\footnote{In this article, we write $\interval{n}$ for $\{1,2,\ldots, n\}$.}. For instance, the permutation $53621487$ has the graphical representation illustrated in Figure~\ref{fig2}.
A {\it left-to-right maximum} of $\sigma\in \S_n$ is a value $\sigma_i$, $1\leq i\leq n$, such that $\sigma_j\leq \sigma_i$ for $j\leq i$. A value $\sigma_i$ of $\sigma$, $1\leq i\leq n$ which is not a left-to-right maximum will be called a {\it non-left-to-right-maximum} of $\sigma$. For instance, if $\sigma=53621487$ then the left-to-right maxima are $5,6,8$ and the non-left-to-right-maxima are $1,2,3,4,7$. 

\smallskip

\def\a{0cm} \def\A{0.5cm}
\def\b{1cm}
\def\c{2cm}
\def\d{3cm}
\def\e{4cm}
\def\f{5cm}  \def\F{5.5cm}
\def\g{6cm} \def\G{6.5cm}
\def\h{7cm} \def\H{7.5cm}
\def\i{8cm} \def\I{8.5cm}
\def\j{9cm}
\def\k{10cm}
\def\l{11cm}
\def\m{12cm}
\def\n{13cm}
\def\o{14cm}
\def\p{15cm}
\def\q{16cm}
\def\r{17cm}
\begin{figure}[ht]
    \label{figPermutation}
        \begin{center}
            \begin{tikzpicture}[scale=0.8]
                \draw[thin, lightgray] (\b,\b) grid (\j,\j);
                \draw[below] (1.5cm,\b) node {\tiny 1};
                \draw[below] (2.5cm,\b) node {\tiny 2};
                \draw[below] (3.5cm,\b) node {\tiny 3};
                \draw[below] (4.5cm,\b) node {\tiny 4};
                \draw[below] (\F,\b) node {\tiny 5};
                \draw[below] (\G,\b) node {\tiny 6};
                \draw[below] (\H,\b) node {\tiny 7};
                \draw[below] (\I,\b) node {\tiny 8};

                \draw[left] (\b,1.5cm) node {\tiny 1};
                \draw[left] (\b,2.5cm) node {\tiny 2};
                \draw[left] (\b,3.5cm) node {\tiny 3};
                \draw[left] (\b,4.5cm) node {\tiny 4};
                \draw[left] (\b,\F) node {\tiny 5};
                \draw[left] (\b,\G) node {\tiny 6};
                \draw[left] (\b,\H) node {\tiny 7};
                \draw[left] (\b,\I) node {\tiny 8};

                \draw(1.5cm,\F)--(2.5cm,3.5cm);
                \draw(2.5cm,3.5cm)--(3.5cm,\G);
                \draw(3.5cm,\G)--(4.5cm,2.5cm);
                \draw(4.5cm,2.5cm)--(\F,1.5cm);
                \draw(\F,1.5cm)--(\G,4.5cm);
                \draw(\G,4.5cm)--(\H,\I);
                \draw(\H,\I)--(\I,\H);

                \fill[green](1.5cm,\F) circle (2mm);
                \fill[green](3.5cm,\G) circle (2mm);
                \fill[green](\H,\I) circle (2mm);

                 \fill[black] (2.5cm,3.5cm) circle (1mm);

                \fill[black](4.5cm,2.5cm) circle (1mm);
                \fill[black](\F,1.5cm) circle (1mm);
                \fill[black](\G,4.5cm) circle (1mm);
                \fill[black](\I,\H) circle (1mm);

                \draw[blue,dashed] (2.5cm,3.5cm) -- (\F,1.5cm) -- (\I,\H);
 \end{tikzpicture}
        \end{center}
    \caption{The graphical representation of $\sigma=53621487$. We show an occurrence of a pattern $213$ with a dashed  line;
    big green points are the left-to-right maxima and small black points are non-left-to-right-maxima.}
     \label{fig2}
\end{figure}
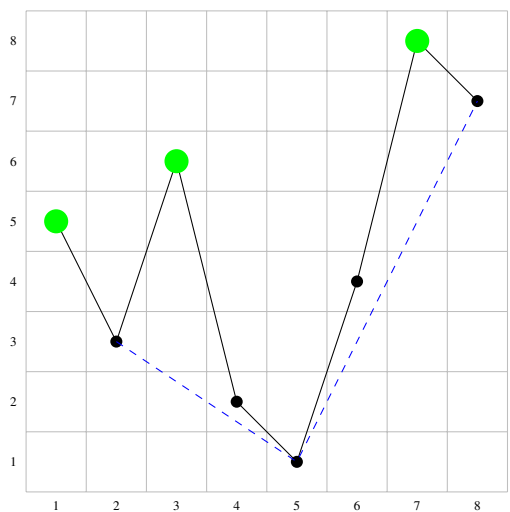

\smallskip
A permutation $\pi$ of length $k$,
is a {\it pattern} of a permutation $\sigma\in \S_n$ if there is a subsequence of $\sigma$ which is order-isomorphic to $\pi$, {\it i.e.}, if there is a subsequence $\sigma_{i_1}\ldots \sigma_{i_k}$ of $\sigma$ with $1\leq i_1< \dots <i_k\leq n$ and such that $\sigma_{i_j}<\sigma_{i_\ell}$ whenever $\pi_{j}<\pi_\ell$. We write $\pi\prec\sigma$ to denote that $\pi$ is a pattern of $\sigma$. A permutation $\sigma$ that does not contain $\pi$ as a pattern is said to {\it avoid} $\pi$. For example, $\sigma=2413$ contains the patterns $231$, $132$, $213$ and $312$, but $\sigma$ avoids the patterns $123$ and $321$. The set of all permutations avoiding the patterns $\pi_1,\ldots, \pi_m$ is denoted by $\Avoid(\pi_1,  \ldots, \pi_m)$.
We say that $\Avoid(\pi_1, \ldots, \pi_m)$ is a {\it class} of pattern avoiding permutations
with {\it basis} $\{\pi_1,\ldots, \pi_m\}$. For instance, we refer to the book of~\cite{Kit} and~\cite{Bon} to deepen these notions.
A set $\mathcal{C}$ of permutations is {\it stable} for the involvement relation~$\prec$ if, for any $\sigma\in \mathcal{C}$, for any $\pi\prec\sigma$, then we also have $\pi\in \mathcal{C}$.

\smallskip 
Now, we formulate a definition that is crucial for the present study.

\begin{Definition}[Permutation basis and basis permutations] \label{definition1} If a set $\mathcal{C}$ of permutations is stable for the involvement relation $\prec$, then $\mathcal{C}$ is a class of pattern avoiding permutations: $\mathcal{C} = \Avoid(\B).$
The basis $\B$ of forbidden patterns is then given by $\B=\{\sigma\notin\mathcal{C},\forall\pi\prec\sigma\mbox{ with } \pi\neq \sigma, \pi\in \mathcal{C}\}\,.$
In other words, the basis $\B$ is the set of {\it minimal} permutations $\sigma$ that do not belong to $\mathcal{C}$, where {\it minimal} is intended in the sense of the pattern-involvement relation $\prec$ on permutations, that is: if $\pi\prec \sigma$ and $\pi\neq \sigma$ then $\pi\in \mathcal{C}$. Notice that $\B$ might be infinite.
We call {\em basis permutations} the permutations belonging to $\B$.
\end{Definition}

Equipped with these definitions, our mission consists now in giving a description of the permutations belonging to the basis $\B_p$
(the permutations which are the minimal forbidden patterns) for the set $\mathcal{C}_p$ of permutations at distance at most $p$ from the identity, i.e., permutations obtained from the identity after at most $p$ right-jumps.

\section{Iteration of right-jumps in permutations: a structural description of the forbidden patterns}

 In this section we study the iteration of right-jumps in terms of pattern avoiding permutations.
We establish that the set $\mathcal{C}_p$ of permutations obtained from the identity after at most $p$ right-jumps is a class of permutations avoiding some patterns that we characterize.

\begin{lem}[Characterization of the distance]\label{lem1}
A permutation obtained from the identity after $p$ right-jumps contains at most $p$ non-left-to-right-maxima.
\end{lem}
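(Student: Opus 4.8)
The plan is to establish the stronger \emph{one-step} statement: a single right-jump increases the number of non-left-to-right-maxima by at most one. The lemma then follows immediately by induction on $p$, the base case being the identity $12\cdots n$, which has no non-left-to-right-maximum at all.

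To prove the one-step claim, fix $\sigma\in\S_n$, let $v=\sigma_i$ be the value that is moved, and let $\sigma'$ be obtained by deleting $v$ and reinserting it immediately to the right of the entry $\sigma_m$ for some $m$ with $i<m\le n$; explicitly $\sigma'=\sigma_1\cdots\sigma_{i-1}\,\sigma_{i+1}\cdots\sigma_m\,v\,\sigma_{m+1}\cdots\sigma_n$. Since a value is a left-to-right maximum exactly when it exceeds all values lying to its left, I would compare, value by value, the set of entries to the left before and after the jump. Four cases cover all values: (i) the entries $\sigma_1,\dots,\sigma_{i-1}$ keep the same left neighbours, hence the same status; (ii) for $k>m$, the entry $\sigma_k$ has the same \emph{set} of entries to its left in $\sigma$ and in $\sigma'$ (the jump merely relocates $v$ inside that set), hence the same status; (iii) for $i<k\le m$, the entries to the left of $\sigma_k$ in $\sigma'$ are exactly those to its left in $\sigma$ with $v$ deleted, and since removing an element cannot increase a maximum, $\sigma_k$ can only switch from non-left-to-right-maximum to left-to-right maximum, never the other way; (iv) the moved value $v$ only gains extra entries to its left, so if it was already a non-left-to-right-maximum it stays one, and it can turn into a non-left-to-right-maximum only if it was a left-to-right maximum in $\sigma$.

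Putting the four cases together, $v$ is the only value whose status can become ``non-left-to-right-maximum'', so a right-jump creates at most one new non-left-to-right-maximum; iterating this $p$ times starting from the identity yields the claimed bound.

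I expect the only slightly delicate point to be the index bookkeeping in case (iii): one must check that, after the left-shift caused by deleting $v$, the prefix of $\sigma_k$ in $\sigma'$ is precisely its prefix in $\sigma$ minus $v$, and then invoke the (elementary) monotonicity of $\max$ under deletion of an element. The remaining verifications are routine.
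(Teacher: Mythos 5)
Your proposal is correct and follows essentially the same route as the paper: both prove by induction on $p$ that a single right-jump increases the number of non-left-to-right-maxima by at most one, by tracking how the jump changes each value's status. Your bookkeeping is organized by position ranges (unchanged prefix, shifted middle block, suffix, and the moved value) rather than by the paper's two cases on whether the moved value is a left-to-right maximum, but the underlying observation---only the jumped value can newly become a non-left-to-right-maximum---is the same.
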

\begin{proof} The result holds for $p=1$; indeed a right-jump transformation of the identity permutation creates the permutation $1~2~\ldots (i-1)~(i+1)\ldots (j-1)~i~j\ldots n$ for $1\leq i<j$, where $i$ is the only one non-left-to-right-maximum.
Now, let us assume that each permutation $\pi$ obtained from the identity after $(p-1)$ right-jumps contains at most $p-1$ non-left-to-right-maxima. Let $\sigma$ be a permutation obtained from the identity after $p$ right-jumps. Using the recurrence hypothesis, $\sigma$ is obtained from a permutation $\pi$ with at most $p-1$ non-left-to-right-maxima by moving an element $\pi_i$, $1\leq i< n$, in a position to its right.

We distinguish two cases: (1) $\pi_i$ is a non-left-to-right-maximum, and (2) $\pi_i$ is a left-to-right maximum.

Case (1): Since $\pi_i$ is a non-left-to-right-maximum, there exists $j<i$ such that $\pi_j$ is a left-to-right maximum satisfying $\pi_j>\pi_i$. Since we move $\pi_i$ to its right, $\pi_j$ remains on the left of $\pi_i$ in $\sigma$ which implies that $\pi_i$ is a non-left-to-right-maximum in $\sigma$. Using the same argument, any non-left-to-right-maximum $\pi_k$ in $\pi$ remains a non-left-to-right-maximum in $\sigma$. Moreover, let $\pi_k$ be a left-to-right maximum in $\pi$, {\it i.e.}, $\pi_j<\pi_k$ for all $j<k$. Since the
right-jump transformation moves to the right of a non-left-to-right-maximum, all values on the left of $\pi_k$ in $\sigma$ are lower than $\pi_k$, which proves that $\pi_k$ remains a left-to-right maximum in~$\sigma$. Therefore, $\sigma$ contains at most $p-1$ non-left-to-right-maxima (as $\pi$ does).

Case (2): $\pi_i$ is a left-to-right maximum, {\it i.e.}, $\pi_j<\pi_i$ for all $j<i$. Since $\pi_i$ is moved to its right, any left-to-right maximum located on the left of $\pi_i$ in $\pi$ remains a left-to-right maximum in $\sigma$. On the other hand, any left-to-right maximum located on the right of $\pi_i$ in $\pi$ is greater than $\pi_i$ and thus, it remains a left-to-right maximum in $\sigma$. Therefore, the number of left-to-right maxima in $\sigma$ is at least the number of left-to-right maxima in $\pi$ minus one (we do not consider $\pi_i$). This means that the number of non-left-to-right-maxima in $\sigma$ is at most $p$.

Considering the two previous cases allows to complete the proof, by induction.\end{proof}

We now derive our first enumeration result
 for the set $\mathcal{D}_p$ of permutations at distance $p$ from the identity, {\em i.e.}, the set of permutations reachable from the identity with $p$ right-jumps, but that one cannot reach with less than $p$ right-jumps:
 $\mathcal{C}_p = \cup_{k \in \interval{p}} \mathcal{D}_k$, and this union is disjoint.

\begin{thm}[Permutations after $p$ right-jumps] \label{thm1}
The set $\mathcal{D}_p$ of permutations at distance $p$ from the identity
is the set of permutations with exactly $p$ non-left-to-right-maxima.
Accordingly, the number $d_{n,p}$ of permutations of length $n$ in $\mathcal{D}_p$ is counted by the Stirling numbers $s(n,n-p)$:
$$d_{n,p}= s(n,n-p)= \sum_{0\leq j\leq h \leq p} (-1)^{j} \binom{h}{j} \binom{n-1+h}{p+h} \binom{n+p}{p-h}\frac{(j-h)^{p+h}}{h!}\,.$$
\end{thm}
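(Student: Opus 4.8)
The plan is to prove the two assertions of Theorem~\ref{thm1} separately. First, the structural statement: $\mathcal{D}_p$ equals the set of permutations with exactly $p$ non-left-to-right-maxima. By Lemma~\ref{lem1}, a permutation at distance at most $p$ has at most $p$ non-left-to-right-maxima, so one inclusion is essentially in hand; what remains is to show that \emph{every} permutation $\sigma$ with exactly $p$ non-left-to-right-maxima is reachable from the identity in exactly $p$ right-jumps, and in no fewer. For reachability in at most $p$ steps I would argue by induction on $p$ (equivalently, on the number of non-left-to-right-maxima): given $\sigma$ with $p\geq 1$ non-left-to-right-maxima, pick a distinguished non-left-to-right-maximum, say the rightmost one, call it $\sigma_i=v$; there is a left-to-right maximum $\sigma_j>v$ with $j<i$. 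I would like to claim that if we ``undo'' a right-jump by moving $v$ leftward to the position immediately before $\sigma_j$ (or, more carefully, to the unique position that makes $v$ a left-to-right maximum), we obtain a permutation $\pi$ with exactly $p-1$ non-left-to-right-maxima, and $\sigma$ is obtained from $\pi$ by a single right-jump. By the induction hypothesis $\pi$ is reachable in $p-1$ right-jumps, hence $\sigma$ in $p$. For the ``no fewer than $p$'' direction, Lemma~\ref{lem1} already gives it: if $\sigma$ were at distance $q<p$, it would have at most $q<p$ non-left-to-right-maxima, a contradiction. Combining, $\sigma\in\mathcal{D}_p$, and since $\mathcal{C}_p=\cup_{k\leq p}\mathcal{D}_k$ is a disjoint union sorted by the number of non-left-to-right-maxima, the classes $\mathcal{D}_p$ are exactly the level sets of that statistic.

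The main obstacle in this first part is making the ``undo a right-jump'' step precise: I must check that moving $v$ to the correct leftward position genuinely decreases the count of non-left-to-right-maxima by exactly one (it removes $v$ from that count, creates no new non-left-to-right-maximum among the values that shifted right, and does not disturb the status of the other values), and that the resulting move from $\pi$ to $\sigma$ is a legal right-jump (i.e., $v$ really moves to the \emph{right} in $\pi$). Choosing $v$ to be the \emph{rightmost} non-left-to-right-maximum, or alternatively inserting $v$ just after the largest left-to-right maximum that precedes its original position, should make all of these verifications go through cleanly; this is the step that needs the most care.

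Second, the enumeration: once $\mathcal{D}_p$ is identified with permutations of $\interval{n}$ having exactly $p$ non-left-to-right-maxima, the count $d_{n,p}$ is the number of such permutations, which is the classical result that the number of permutations of $\interval{n}$ with exactly $n-p$ left-to-right maxima is the unsigned Stirling number of the first kind $\left[{n\atop n-p}\right]=|s(n,n-p)|$. I would prove this by the standard Foata-type bijection: writing a permutation in its cycle decomposition with each cycle headed by its largest element and cycles arranged in increasing order of these largest elements, the ``fundamental transform'' turns cycle-minima-after-maxima structure into the word whose left-to-right maxima are precisely the cycle leaders, so the number of left-to-right maxima of the one-line word equals the number of cycles; and permutations of $\interval{n}$ with exactly $k$ cycles are counted by $\left[{n\atop k}\right]$. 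With $k=n-p$ this gives $d_{n,p}=\left[{n\atop n-p}\right]$. Finally, the explicit triple-sum formula displayed in the theorem is just one of the known closed forms for the (unsigned) Stirling numbers of the first kind; I would cite a standard reference (or derive it from the generating function $\prod_{i=0}^{n-1}(x+i)=\sum_k \left[{n\atop k}\right]x^k$ via a routine inclusion–exclusion expansion), since verifying it is a mechanical computation with no conceptual content. The only mild care needed here is matching sign conventions between $s(n,n-p)$ and $\left[{n\atop n-p}\right]$; since $n-p$ and $n$ have the same parity exactly when $p$ is even, and the displayed formula is manifestly nonnegative-looking only after simplification, I would state the identification in terms of the unsigned numbers and note that in this range $s(n,n-p)=\left[{n\atop n-p}\right]$ up to the sign $(-1)^p$, adopting whichever convention the displayed formula evaluates to.
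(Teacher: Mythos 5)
Your proposal is correct and follows essentially the same route as the paper: the paper also ``undoes'' one right-jump at a time, except that it picks the \emph{leftmost} non-left-to-right-maximum $\sigma_i$ and reinserts it immediately before the smallest left-to-right maximum $\sigma_j>\sigma_i$, which makes your flagged verification immediate (everything to the left of $\sigma_i$ is then a left-to-right maximum, so $\sigma_{j-1}<\sigma_i<\sigma_j$), whereas your rightmost choice also works but needs the extra one-line observation that every entry before that insertion point, being dominated by a left-to-right maximum smaller than $\sigma_j$, is in fact smaller than $\sigma_i$. For the enumeration the paper simply cites the classical fact that permutations with $n-p$ left-to-right maxima are counted by the signless Stirling numbers of the first kind together with Schl\"omilch's closed form, so your Foata-bijection derivation is a harmless (standard) elaboration rather than a different argument.
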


\begin{proof}After considering Lemma~\ref{lem1}, it suffices to prove that any permutation $\sigma$ with at most $p$ non-left-to-right-maxima can be obtained from the identity after $p$ right-jumps. Let $\sigma$ be a permutation with $p\geq 1$ non-left-to-right-maxima. Let us assume that the leftmost non-left-to-right-maximum is $\sigma_i$ and let $j<i$ be the position of the smallest left-to-right maximum $\sigma_j$ such that $\sigma_j>\sigma_i$. Then we set $\sigma'=\sigma_1\ldots\sigma_{j-1}\sigma_i\sigma_j\ldots
\sigma_{i-1}\sigma_{i+1}\ldots\sigma_n$. Since we have $\sigma_j>\sigma_i$ and also $\sigma_i>\sigma_{j-1}$ (if $\sigma_{j-1}$ exists), $\sigma_i$ becomes a left-to-right maximum in $\sigma'$. Thus, $\sigma'$ contains exactly $p-1$ non-left-to-right-maxima and by construction, $\sigma$ can be obtained from $\sigma'$ by a right-jump.
This proves that permutations at distance $p$ from the identity are exactly the permutations with $n-p$ left-to-right-maxima,
which are known to be counted by $s(n,n-p)$, the signless Stirling number of the first kind (see~\cite{FlSe} for the closed-form formula due to Schl{\"o}milch,
and sequence \oeis{A094638} in~\cite{Slo} for many occurrences of  the corresponding triangular array).
\end{proof}

For instance, the values of $d_{n,p}$ for $n=7$ and $0\leq p <7$ are 1, 21, 175, 735, 1624, 1764, 720.

\medskip

The following corollary
says a little more on the lattice structure associated to our process "a particle jumps to the right".
\medskip

\begin{cor}[Changing the starting point and sorting algorithms]
\label{cor0} 
For any permutation $p$, one denotes by $t_p$ its number of non-left-to-right-maxima.
Let $\sigma$ and $\pi$ be two permutations, then $t_{\sigma^{-1}\cdot \pi}$ right-jumps are necessary to obtain $\pi$ from $\sigma$. In particular, $t_{\sigma^{-1}}$ right-jumps are necessary and sufficient to sort by insertion the permutation $\sigma$ into the identity.
\end{cor}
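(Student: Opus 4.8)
The plan is to recast a right-jump as a group operation and then read off the statement from Theorem~\ref{thm1}. Write a permutation $\tau\in\S_n$ in one-line notation $\tau_1\cdots\tau_n$ with $\tau_k=\tau(k)$. A right-jump that removes the entry in position $i$ and reinserts it in position $j$ (with $i<j$) turns $\tau$ into the word $\tau_1\cdots\tau_{i-1}\,\tau_{i+1}\cdots\tau_j\,\tau_i\,\tau_{j+1}\cdots\tau_n$. A direct one-line check shows that this new word equals $\tau\circ\gamma_{i,j}$, where $\gamma_{i,j}$ is the permutation of positions that cyclically shifts the block $\{i,i+1,\dots,j\}$, namely the cycle $(i\ i{+}1\ \cdots\ j)$, and fixes every position outside that block. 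So, at the level of the symmetric group, performing a right-jump means multiplying on the right by one of the interval cycles $\gamma_{i,j}$, $1\le i<j\le n$.

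Next I would use the fact that right-multiplication by such cycles commutes with left-multiplication by a fixed permutation. A sequence of $k$ right-jumps carrying $\sigma$ to $\pi$ is precisely a factorization $\sigma\,\gamma^{(1)}\cdots\gamma^{(k)}=\pi$ with each $\gamma^{(\ell)}$ an interval cycle, i.e. $\gamma^{(1)}\cdots\gamma^{(k)}=\sigma^{-1}\pi$. Looking at the partial products $\mathrm{id},\ \gamma^{(1)},\ \gamma^{(1)}\gamma^{(2)},\ \dots,\ \gamma^{(1)}\cdots\gamma^{(k)}$ shows that this is exactly a length-$k$ sequence of right-jumps from the identity to $\sigma^{-1}\pi$. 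Hence the right-jump trajectories starting at $\sigma$ are the images under $\tau\mapsto\sigma\tau$ of the trajectories starting at the identity, and the minimal number of right-jumps needed to obtain $\pi$ from $\sigma$ equals the minimal number needed to obtain $\sigma^{-1}\pi$ from the identity.

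To finish, apply Theorem~\ref{thm1}: every permutation is reachable from the identity, and the distance from the identity to $\sigma^{-1}\pi$ is its number of non-left-to-right maxima, i.e. $t_{\sigma^{-1}\pi}$; so that many right-jumps are necessary and sufficient to get from $\sigma$ to $\pi$. Taking $\pi=\mathrm{id}$ gives $t_{\sigma^{-1}}$ for the number of right-jumps needed to bring $\sigma$ to the identity, and since right-jumps are, up to the mirror symmetry recalled in the Introduction, exactly the elementary moves of the insertion-sorting operator, this is the announced cost of sorting $\sigma$ by insertion.

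I do not expect any real obstacle: the argument is a bookkeeping translation. The one delicate point is to pin down all composition conventions consistently --- verifying that a right-jump is right-multiplication (it acts on positions) and not left-multiplication, and keeping track of the direction of the mirror symmetry that identifies right-jumps with insertion-sort moves --- so that the inverse ends up on the correct side and one really gets $\sigma^{-1}\pi$ rather than $\pi\sigma^{-1}$.
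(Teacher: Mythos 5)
Your proposal is correct and follows essentially the same route as the paper: both reduce the problem to Theorem~\ref{thm1} by observing that a right-jump acts on positions (your right-multiplication by an interval cycle is exactly the paper's remark that left-composition by a fixed $\gamma$ sends right-jumps to right-jumps), so shortest right-jump paths from $\sigma$ to $\pi$ correspond bijectively to shortest paths from the identity to $\sigma^{-1}\cdot\pi$. Your explicit factorization into cycles $(i\ i{+}1\ \cdots\ j)$ is just a slightly more algebraic phrasing of the same translation argument.
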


\begin{proof} Firstly, $t_\sigma$ right-jumps are necessary and sufficient to obtain $\sigma$ from the identity.
Therefore, $t_{\sigma^{-1}\cdot \pi}$ transformations are sufficient and necessary to obtain $\sigma^{-1}\cdot\pi$ from the identity. We set $t=t_{\sigma^{-1}\cdot \pi}$ and let $Id=\chi_0, \chi_1, \ldots, \chi_{t-1}, \chi_t=\sigma^{-1}\cdot\pi$ be a shortest path between the identity and $\sigma^{-1}\cdot\pi$.
 Now, let us prove that if a permutation $\beta$ is obtained from $\alpha$ by one right-jump, then for any permutation $\gamma$, $\gamma\cdot\beta$ is also obtained from $\gamma\cdot\alpha$ by one right-jump.
Indeed, if we have $\alpha=\alpha_1\alpha_2\ldots\alpha_n$ then $\beta$ can be written as $\beta=\alpha_1\ldots \alpha_{i-1}\alpha_{i+1} \ldots \alpha_{j-1}\alpha_i\alpha_j\ldots \alpha_n$. Composing by a permutation $\gamma$, we obtain $\gamma\cdot\alpha=\gamma(\alpha_1)\gamma(\alpha_2)\ldots\gamma(\alpha_n)$ and $\gamma\cdot\beta=\gamma(\alpha_1)\ldots \gamma(\alpha_{i-1})\gamma(\alpha_{i+1}) \ldots \gamma(\alpha_{j-1})\gamma(\alpha_i)\gamma(\alpha_j)\ldots \gamma(\alpha_n)$ which proves that $\gamma\cdot\beta$ is also obtained from $\gamma\cdot\alpha$ by one right-jump.
So if we compose by $\sigma$ at each step of the above shortest path, then we obtain a shortest path of $t_{\sigma^{-1}\cdot \pi}$ right-jumps from $\sigma$ to $\pi$, which completes the proof.\end{proof}
\medskip

Since the set $\mathcal{C}_p$ of permutations obtained after $p$ right-jumps is stable for the relation~$\prec$, $\mathcal{C}_p$ is also a class $\Avoid(\B_p)$ of pattern avoiding permutations where $\B_p$ is the basis consisting of minimal permutations~$\sigma$ that are not in $\mathcal{C}_p$ (see Definition~\ref{definition1}).
Theorem~\ref{thm2} gives the explicit description of these basis permutations.

\begin{thm}[Structural description of the basis permutations]\label{thm2}
A permutation $\sigma\in \S_n$ belongs to the basis ${\B}_p$ of forbidden patterns,
if and only if the following conditions hold:
\begin{enumerate}
\item[($i$)] $\sigma$ contains exactly $p+1$ non-left-to-right-maxima.
\item[($ii$)] $n-1$ is a non-left-to-right-maximum.
\item[($iii$)] $\sigma_2$ is a non-left-to-right-maximum.
\item[($iv$)] For any three left-to-right maxima, $\sigma_i$, $\sigma_j$ and $\sigma_k$ (with $i<j<k$) such that there is no left-to-right maximum between them, there exists a non-left-to-right-maximum $\sigma_t$ (with $j<t<k$) satisfying $\sigma_t>\sigma_i$.
\end{enumerate}
\end{thm}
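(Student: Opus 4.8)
The plan is to recast the claim entirely in terms of how the number of non-left-to-right-maxima reacts to deleting a single entry. By Theorem~\ref{thm1}, $\mathcal{C}_p$ is the stable (downward-closed for $\prec$) set of permutations with at most $p$ non-left-to-right-maxima, and, $\mathcal{C}_p$ being stable, a permutation is minimal outside it exactly when it lies outside $\mathcal{C}_p$ while all of its one-entry deletions lie inside (every proper pattern of $\sigma$ is contained in some one-entry deletion of $\sigma$). So $\sigma\in\B_p$ iff $\sigma$ has at least $p+1$ non-left-to-right-maxima and each of its one-entry deletions has at most $p$. I would first prove a deletion lemma. \textbf{(a)} Deleting a non-left-to-right-maximum leaves the left-to-right-maximum status of every remaining entry unchanged, hence lowers the count by exactly one. \textbf{(b)} Deleting a left-to-right maximum $\sigma_m$ lowers the count by $|F_m|$, where $F_m$ is the set of non-left-to-right-maxima $\sigma_\ell$ with $\ell>m$ whose only \emph{larger-valued predecessor} --- i.e.\ entry of $\sigma$ lying to the left with a larger value --- is $\sigma_m$ itself. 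If the left-to-right maxima are $\sigma_{q_1}<\dots<\sigma_{q_r}$ (so $q_1<\dots<q_r$ and $\sigma_{q_r}=n$), then, with the conventions $\sigma_{q_0}:=0$ and $q_{r+1}:=n+1$, the set $F_{q_s}$ is contained in the positions strictly between $q_s$ and $q_{s+1}$, and --- this is the form I will actually use --- $F_{q_s}\neq\emptyset$ if and only if some non-left-to-right-maximum at a position strictly between $q_s$ and $q_{s+1}$ has value greater than $\sigma_{q_{s-1}}$.

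\textbf{Necessity of (i)--(iv).} Let $\sigma\in\B_p$. Deleting a non-left-to-right-maximum lowers the count by exactly one (part~(a)), and the outcome must be at most $p$; as the count is at least $p+1$, it equals $p+1$, which is~(i). For the remaining conditions I argue by contraposition, each time exhibiting a deletion of a \emph{left-to-right maximum} that flips nothing (so $F_m=\emptyset$ and the count stays at $p+1>p$, contradicting minimality). If~(ii) fails, value $n-1$ precedes value $n$; then no entry to the right of $n$ has $n$ as its unique larger-valued predecessor, so $F_{q_r}=\emptyset$ and deleting $n$ changes nothing. If~(iii) fails, $\sigma_2>\sigma_1$; then whenever $\sigma_1$ is a larger-valued predecessor of some entry so is $\sigma_2$, hence $F_{q_1}=\emptyset$ and deleting $\sigma_1$ changes nothing. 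If~(iv) fails, there are three consecutive left-to-right maxima $\sigma_i<\sigma_j<\sigma_k$ such that every non-left-to-right-maximum between positions $j$ and $k$ is smaller than $\sigma_i$; then $\sigma_i$ is a larger-valued predecessor of every such entry (besides $\sigma_j$), so, writing $j=q_s$, $F_{q_s}=\emptyset$ and deleting $\sigma_j$ changes nothing.

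\textbf{Sufficiency.} Assume (i)--(iv). By~(i), $\sigma$ has $p+1>p$ non-left-to-right-maxima, so $\sigma\notin\mathcal{C}_p$; it remains to show each one-entry deletion has at most $p$. For a non-left-to-right-maximum this is part~(a). For a left-to-right maximum $\sigma_{q_s}$ it suffices, by~(b), to prove $F_{q_s}\neq\emptyset$, and conditions (ii)--(iv) are tailored to deliver this. For $s=1$, condition~(iii) gives a non-left-to-right-maximum strictly between positions $q_1=1$ and $q_2$, and it trivially exceeds $\sigma_{q_0}=0$. For $2\leq s\leq r-1$, condition~(iv) applied to the consecutive left-to-right maxima $\sigma_{q_{s-1}},\sigma_{q_s},\sigma_{q_{s+1}}$ gives a non-left-to-right-maximum strictly between $q_s$ and $q_{s+1}$ of value greater than $\sigma_{q_{s-1}}$. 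For $s=r$, condition~(ii) puts value $n-1$ to the right of value $n=\sigma_{q_r}$, and $n-1>\sigma_{q_{r-1}}$ since $n-1$, not being a left-to-right maximum, is none of $\sigma_{q_1},\dots,\sigma_{q_r}$; thus $n-1$ witnesses $F_{q_r}\neq\emptyset$. So every one-entry deletion lands in $\mathcal{C}_p$, and $\sigma\in\B_p$.

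\textbf{Where the work is.} The only substantial point is the ``if and only if'' in part~(b), and its subtlety is that a non-left-to-right-maximum witnessing $F_{q_s}\neq\emptyset$ must have $\sigma_{q_s}$ as its \emph{unique} larger-valued predecessor, not just as one of them; in particular one cannot simply recycle the entry handed over by (iii) or~(iv). The remedy is to take the entry of \emph{largest value} among those at positions strictly between $q_s$ and $q_{s+1}$: it automatically exceeds every entry between $q_s$ and itself, while every left-to-right maximum before $q_s$ is $\leq\sigma_{q_{s-1}}$ and every non-left-to-right-maximum before $q_s$ is even smaller; so as soon as this largest value exceeds $\sigma_{q_{s-1}}$ --- exactly what (ii)--(iv) ensure, the conventions $\sigma_{q_0}=0$, $q_{r+1}=n+1$ making the boundary cases $s=1$, $s=r$ (and $r=1$) instances of the general one --- it becomes a left-to-right maximum after $\sigma_{q_s}$ is removed. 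With that bookkeeping settled, everything else is short verification.
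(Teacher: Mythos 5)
Your proposal is correct and follows essentially the same route as the paper: reduce basis membership to one-entry deletions, note that deleting a non-left-to-right-maximum drops the count by exactly one (giving (i)), and show that (ii)–(iv) are exactly the conditions under which deleting any left-to-right maximum promotes some non-left-to-right-maximum, so the count drops below $p+1$. Your packaging via the sets $F_m$ with the conventions $\sigma_{q_0}=0$, $q_{r+1}=n+1$, and the largest-value witness (where the paper picks the leftmost sufficiently large entry) is only a cosmetic reorganization of the same argument.
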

\begin{proof} N.B.: Figure~\ref{fig3} on next page illustrates the different claims and notations of this theorem.

Let $\sigma\in \S_n$ be a permutation belonging to the basis $\B_p$, {\it i.e.}, $\sigma\notin\mathcal{C}_p$ and $\pi\prec \sigma$ implies $\pi\in \mathcal{C}_p$. Throughout this proof, we refer to Figure~\ref{fig3} for an illustration of the three conditions ($ii$), ($iii$) and ($iv$).

 - First, the deletion of a non-left-to-right-maximum in $\sigma$ decreases the number of non-left-to-right-maxima by one exactly. Therefore, the minimality of $\sigma$ implies that $\sigma$ necessarily contains exactly $p+1$ non-left-to-right-maxima, which proves ($i$).

 - For a contradiction, assume that ($ii$) is not satisfied, {\it i.e.}, $n-1$ is a left-to-right maximum. Since $n$ is always a left-to-right maximum, $n$ is on the right of $n-1$ in $\sigma$. Thus, the permutation $\pi$ obtained by deleting $n$ from $\sigma$ also contains $p+1$ non-left-to-right-maxima (a non-left-to-right-maximum on the right of $n$ in $\sigma$ remains a non-left-to-right-maximum on the right of $n-1$ in $\pi$). Therefore, $\pi$ does not belong to $\mathcal{C}_p$; this gives a contradiction with the minimality of $\sigma$.

 - For a contradiction, assume that ($iii$) is not satisfied, {\it i.e.}, $\sigma_2$ is a left-to-right maximum and thus, $\sigma_1$ is smaller than $\sigma_2$. Thus, the permutation $\pi$ obtained by deleting $\sigma_1$ from $\sigma$ also contains $p+1$ non-left-to-right-maxima. Indeed, a non-left-to-right-maximum $\sigma_i$ in $\sigma$ such that $\sigma_i<\sigma_1$ becomes a non-left-to-right-maximum $\sigma_i<\sigma_2-1$ in $\pi$. Moreover, a non-left-to-right-maximum $\sigma_i$ in $\sigma$ such that $\sigma_i>\sigma_1$ (there is $\ell$, $2\leq \ell<i$, with $\sigma_i<\sigma_\ell$) becomes a non-left-to-right-maximum $\sigma_i-1$ in $\pi$ with $\sigma_i-1<\sigma_\ell-1$. Therefore, $\pi$ does not belong to $\mathcal{C}_p$; this gives a contradiction with the minimality of $\sigma$.

 - For a contradiction, assume that ($iv$) is not satisfied; {\it i.e.}, there are $(i,j,k)$, $1\leq i<j<k\leq n$, such that $\sigma_i$, $\sigma_j$ and $\sigma_k$ are three consecutive left-to-right maxima of $\sigma$ (consecutive means that there is no other left-to-right maximum between $\sigma_i$ and $\sigma_j$ and between $\sigma_j$ and $\sigma_k$), and such that there is no non-left-to-right-maximum $\sigma_\ell$, $j<\ell<k$, satisfying $\sigma_i<\sigma_\ell$. Let $\pi$ be the permutation obtained from $\sigma$ by deleting $\sigma_j$.
 It is clear that any non-left-to-right-maximum on the left of $\sigma_j$ in $\sigma$ remains a non-left-to-right-maximum in $\pi$.
 Let $\sigma_\ell$, $\ell>k$, be a non-left-to-right-maximum on the right of $\sigma_k$ in $\sigma$. 
 If $\sigma_\ell<\sigma_j$ then $\sigma_\ell<\sigma_j\leq\sigma_k-1$ and $\sigma_\ell$ remains a non-left-to-right-maximum in $\pi$. If $\sigma_\ell>\sigma_j$ then there is $\sigma_{t}$, $t\geq k$, such that $\sigma_\ell\leq\sigma_t\geq\sigma_k$, and thus, there is $\sigma_t-1$ on the left of $\sigma_k-1$ in $\pi$ with $\sigma_t-1>\sigma_k-1$ which means that $\sigma_\ell-1$ is a non-left-to-right-maximum in $\pi$.
 Let $\sigma_\ell$, $j<\ell<k$, be a non-left-to-right-maximum between $\sigma_j$ and $\sigma_k$ in $\sigma$. Assuming that ($iv$) is not satisfied, we deduce that $\sigma_\ell<\sigma_i$, and $\sigma_\ell$ remains a non-left-to-right-maximum in $\pi$.
  Finally, $\pi$ also contains  $p+1$ non-left-to-right-maxima; this gives a contradiction with the minimality of $\sigma$.

Conversely, let $\sigma$ be a permutation satisfying ($i$), ($ii$), ($iii$) and ($iv$) and $\pi$ be a permutation obtained by deleting $\sigma_i$, $1\leq i\leq n$, from $\sigma$. Let us prove that $\pi$ belongs to $\mathcal{C}_p$, that is, $\pi$ contains at most $p$ non-left-to-right-maxima.

- If $\sigma_i$ is a non-left-to-right-maximum of $\sigma$, then $\pi$ has 
$p$ non-left-to-right-maxima and thus, $\pi\in\mathcal{C}_p$.

- Now, let us assume that $\sigma_i$ is a left-to-right maximum of $\sigma$. If $\sigma_i=n$, then ($ii$) implies that $n-1$ is a non-left-to-right-maximum of $\sigma$ and a left-to-right maximum in $\pi$; this implies that $\pi$ contains $p$ non-left-to-right-maxima and thus, $\pi\in\mathcal{C}_p$. If $\sigma_i=\sigma_1$, then ($iii$) implies that $\sigma_2$ is a non-left-to-right-maximum of $\sigma$ and $\sigma_2-1$ is a left-to-right maximum in $\pi$; this implies that $\pi$ contains $p$ non-left-to-right-maxima and thus, $\pi\in\mathcal{C}_p$.
If there exists $(j,k)$, $1\leq j<i<k\leq n$, such that $\sigma_j$ and $\sigma_k$ are left-to-right maxima (we choose $j$ the greatest possible and $k$ the lowest possible with this property). Then, ($iv$) implies that there is $\sigma_\ell$, $i<\ell<k$, such that $\sigma_j<\sigma_\ell<\sigma_i$ (we choose the lowest possible $\ell>i$). Thus, $\sigma_\ell$ is a non-left-to-right-maximum in $\sigma$ and becomes a left-to-right maximum in $\pi$, which implies that $\pi$ contains exactly $p$ non-left-to-right-maxima, and thus $\pi\in\mathcal{C}_p$.

Finally, the permutation $\pi$ necessarily belongs to $\mathcal{C}_p$, which completes the proof.\end{proof}

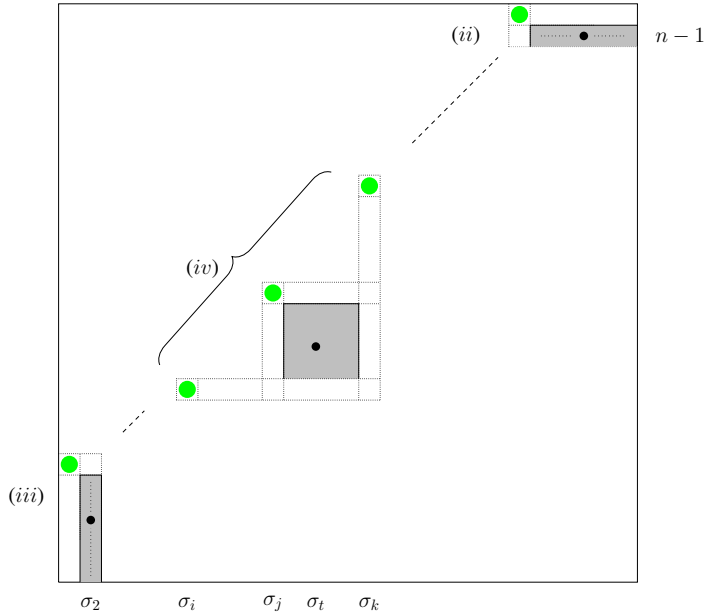
\begin{figure}[hbt]
    \begin{center}
            \scalebox{0.57}
            {\begin{tikzpicture}
                  \draw [very thin] (\a,-1cm) -- (\a,12.5cm) -- (13.5cm,12.5cm) -- (13.5cm,-1cm)--(\a,-1cm);
                \draw [fill=lightgray](\A,-1cm)--(\A,1.5cm)--(\b,1.5cm)--(\b,-1cm);
                \draw [fill=lightgray](5.25cm,3.75cm)--(5.25cm,\F)--(\h,\F)--(\h,3.75cm);
                \draw [fill=lightgray](11cm,11.5cm)--(11cm,12cm)--(13.5cm,12cm)--(13.5cm,11.5cm);
                \draw [dashed](1.5cm,2.5cm)--(\c,3cm);
                \draw [dashed](8.25cm,9.25cm)--(10.25cm,11.25cm);
                \draw [ densely dotted](\a,1.5cm)--(1cm,1.5cm);
                \draw [ densely dotted](\a,\c)--(1cm,\c);
                \draw [ densely dotted](2.75,3.25cm)--(7.5cm,3.25cm);
                \draw [ densely dotted](2.75,3.75cm)--(7.5cm,3.75cm);
                \draw [ densely dotted](4.75,\g)--(7.5cm,\g);
                \draw [ densely dotted](4.75,\F)--(7.5cm,\F);
                     \draw [ densely dotted](7cm,8cm)--(7.5cm,8cm);
                \draw [ densely dotted](7cm,8.5cm)--(7.5cm,8.5cm);
		 \draw [gray, densely dotted](10.5cm,11.5cm)--(13.5cm,11.5cm);
		 \draw [ densely dotted](10.5cm,12cm)--(12.5cm,12cm);
                \draw [ densely dotted](\A,\a)--(\A,2cm);
                \draw [ densely dotted](\b,\a)--(\b,2cm);
                \draw [ densely dotted](2.75cm,3.25cm)--(2.75cm,3.75cm);
                \draw [ densely dotted](3.25cm,3.25cm)--(3.25cm,3.75cm);
                \draw [ densely dotted](4.75cm,3.25cm)--(4.75cm,6cm);
                \draw [ densely dotted](5.25cm,3.25cm)--(5.25cm,6cm);
                \draw [ densely dotted](\h,3.25cm)--(\h,8.5cm);
                \draw [ densely dotted](\H,3.25cm)--(\H,8.5cm);
                \draw [ densely dotted](10.5cm,11.5cm)--(10.5cm,12.5cm);
                 \draw [ densely dotted](11cm,11.5cm)--(11cm,12.5cm);
                  \draw [dotted](0.75cm,-0.85cm)--(0.75cm,0.25cm);
                 \draw [dotted](0.75cm,0.65cm)--(0.75cm,1.4cm);
                  \draw [dotted](11.25cm,11.75cm)--(12cm,11.75cm);
                  \draw [dotted](12.5cm,11.75cm)--(13.25cm,11.75cm);
                   \fill[green] (0.25cm,1.75cm) circle (2mm);
                   \fill[black] (0.75cm,0.45cm) circle (1mm);
                   \fill[green] (\d,3.5cm) circle (2mm);
                   \fill[green] (\f,5.75cm) circle (2mm);
                \fill[green] (7.25cm,8.25cm) circle (2mm);
               \fill[black] (\g,4.5cm) circle (1mm);
               \fill[green] (10.75cm,12.25cm) circle (2mm);
               \fill[black] (12.25cm,11.75cm) circle (1mm);
                   \draw (-0.75cm,\b) node {\Large ($iii$)}; \draw (0.75cm,-1.5cm) node {\Large $\sigma_2$};
                   \draw (\d,-1.5cm) node {\Large $\sigma_i$};\draw (\f,-1.5cm) node {\Large $\sigma_j$};
                   \draw (\g,-1.5cm) node {\Large $\sigma_t$};\draw (7.25cm,-1.5cm) node {\Large $\sigma_k$};
                \draw (9.5cm,11.75cm) node {\Large ($ii$)}; \draw (14.5cm,11.75cm) node {\Large $n-1$};
                 \draw [decorate,decoration={brace,amplitude=12pt},xshift=-4pt,yshift=2pt](2.5cm,\e) -- (\G,\I) node [black,midway,xshift=-1cm] {\Large ($iv$)};
                          \end{tikzpicture}
              }
        \end{center}
\caption{An illustration of Theorem~\ref{thm2} that characterizes the basis permutations of $\B_p$.
Condition ($i$) states such a basis permutation has $p+1$ non-left-to-right maxima (drawn with a small black point, while left-to-right maxima are drawn with a big green point),
condition ($ii$) states that $n-1$ is not a left-to-right maximum,
condition ($iii$) states that $\sigma_2$ is not a left-to-right maximum,
and condition ($iv$) states that there is a "higher" non-left-to-right maximum between 3 left-to-right maxima.
}\label{fig3}
\end{figure}

\newpage

\begin{cor}[Length of the forbidden patterns]\label{cor2}
Permutations in $\B_p$ have  length $\leq 2 (p +1)$ and $\geq p+2$.
As a consequence, $\B_p$ is a finite set.
\end{cor}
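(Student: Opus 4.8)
The plan is to read off both inequalities directly from the structural characterization in Theorem~\ref{thm2}, so that no new combinatorics is needed. Let $\sigma\in\B_p$ have length $n$. By ($i$), $\sigma$ has exactly $p+1$ non-left-to-right-maxima, hence exactly $m:=n-(p+1)$ left-to-right maxima; since $\sigma_1$ is always a left-to-right maximum we have $m\geq1$, and the lower bound $n=m+(p+1)\geq p+2$ follows immediately. For the upper bound I would prove $m\leq p+1$, which gives $n=m+(p+1)\leq2(p+1)$; finiteness is then automatic, since $\B_p\subseteq\bigcup_{p+2\leq n\leq2(p+1)}\S_n$ is a finite union of finite sets.

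To bound $m$, let $1=i_1<i_2<\dots<i_m$ be the positions of the left-to-right maxima and set $i_{m+1}:=n+1$. For $1\leq r\leq m$ define the block $G_r:=\{\,t:i_r<t<i_{r+1}\,\}$; since $i_1=1$, the blocks $G_1,\dots,G_m$ are pairwise disjoint and their union is exactly the set of positions carrying non-left-to-right-maxima, so $p+1=\sum_{r=1}^{m}|G_r|$. The crux is to show that each block is nonempty, for this yields $p+1\geq m$. I would argue as follows: ($iii$) forces $\sigma_2$ to be a non-left-to-right-maximum, so $2\notin\{i_1,\dots,i_m\}$ and hence $2\in G_1$ (note $i_2\geq3$, and $i_2=n+1\geq3$ when $m=1$, using $n\geq p+2\geq2$); the left-to-right maxima form an increasing subsequence ending at the global maximum, so $\sigma_{i_m}=n$, and since ($ii$) makes $n-1$ a non-left-to-right-maximum, its only larger value $n$ must precede it, placing $n-1$ at a position $>i_m$, i.e.\ in $G_m$; finally, for $2\leq r\leq m-1$, condition ($iv$) applied to the three consecutive left-to-right maxima $\sigma_{i_{r-1}},\sigma_{i_r},\sigma_{i_{r+1}}$ produces a non-left-to-right-maximum at a position strictly between $i_r$ and $i_{r+1}$, i.e.\ in $G_r$. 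So every $G_r$ is nonempty, $m\leq p+1$, and we are done.

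I expect no real obstacle, only some bookkeeping in the degenerate ranges $m\in\{1,2\}$, where ($iv$) is vacuous and only ($ii$) and ($iii$) can be invoked: when $m=1$ the blocks $G_1$ and $G_m$ coincide, but this is harmless since we only use $|G_1|\geq1$; when $m=2$ there are two disjoint nonempty blocks $G_1$ (containing position $2$) and $G_2$ (containing the position of $n-1$), still giving $|G_1|+|G_2|\geq2=m$. The one point worth spelling out carefully is the claim $\sigma_{i_m}=n$ used above.
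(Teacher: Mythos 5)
Your proof is correct and takes essentially the same route as the paper: the paper's (very terse) argument just asserts that Theorem~\ref{thm2} forces the number of left-to-right maxima to be at most the number of non-left-to-right-maxima, and your block decomposition using conditions ($ii$), ($iii$) and ($iv$) is precisely a detailed justification of that inequality, together with the trivial lower bound $n\geq p+2$. No gap to report.
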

\begin{proof}Theorem~\ref{thm2} implies that the number of left-to-right maxima in a basis permutation is at most the number of non-left-to-right-maxima.
Since a basis permutation of $\B_p$ has $p+1$ non-left-to-right-maxima, its length is at most $2(p+1)$.
\end{proof}

\bigskip

For instance, the basis for $p=0, 1, 2$ are respectively $\B_0=\{21\}$, $\B_1=\{312,321,2143\}$ (recovering the result of Magn{\'u}sson), and
$\B_2=\{4123,4132,4213,4231,4312,4321,21534, 21543,31254,\linebreak32154,31524,31542, 32514,32541,214365\}$.

\section{Enumerative results for basis permutations}

In order to obtain a recursive formula for the number $b_{n,p}$ of permutations of length $n$ in the basis $\B_p$,
we present the following preliminary lemma.

\begin{lem}[A recursive description] \label{lem3}
Let $\sigma\in \S_n$ be a basis permutation having $p\geq 1$ non-left-to-right-maxima and such that $\sigma_{k+1}=n$, $k\geq 0$. Let $\alpha$ be the subsequence $\sigma_1\sigma_2\ldots \sigma_{k}$ and $\pi$ be the permutation in $\S_{k}$ isomorphic to $\alpha$. Then, $\pi$ is a basis permutation with $p-n+k+1$ non-left-to-right-maxima.
\end{lem}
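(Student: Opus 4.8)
The plan is to write $\sigma$ as the concatenation $\alpha\, n\, \beta$, where $\beta=\sigma_{k+2}\ldots\sigma_n$, to read off the left-to-right maximum structure of $\sigma$ from this decomposition, and then to verify that $\pi$ satisfies the four conditions ($i$)--($iv$) of Theorem~\ref{thm2} for the parameter $q:=p-n+k$. By the ``if'' part of that theorem this yields $\pi\in\B_q$, which is exactly the statement that $\pi$ is a basis permutation having $q+1=p-n+k+1$ non-left-to-right-maxima.

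The structural observation underlying everything is that an entry $\sigma_i$ with $i\le k$ is a left-to-right maximum of $\sigma$ if and only if it is a left-to-right maximum of the prefix $\alpha$ --- the defining inequality $\sigma_j\le\sigma_i$ for $j\le i$ involves only positions $\le k$ --- hence, $\alpha$ and $\pi$ being order-isomorphic, if and only if $\pi_i$ is a left-to-right maximum of $\pi$. Moreover $\sigma_{k+1}=n$ is a left-to-right maximum, whereas every entry of $\beta$ is a non-left-to-right-maximum since $n$ lies to its left. Consequently the non-left-to-right-maxima of $\sigma$ are precisely those of $\pi$ together with the $n-k-1$ entries of $\beta$; as $\sigma$ has $p$ non-left-to-right-maxima, $\pi$ has $p-(n-k-1)=p-n+k+1$ of them. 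This is condition ($i$) for $\pi$ with parameter $q$.

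Conditions ($iii$) and ($iv$) for $\pi$ then transfer almost verbatim. For ($iii$), assume $k\ge 2$ (the case $k=1$ cannot occur, as it would make $\sigma_2=n$ a left-to-right maximum, contradicting that $\sigma$ is a basis permutation); then $\sigma_2$ being a non-left-to-right-maximum means $\sigma_1>\sigma_2$, hence $\pi_1>\pi_2$ and $\pi_2$ is a non-left-to-right-maximum of $\pi$. For ($iv$), three consecutive left-to-right maxima of $\pi$ correspond under the isomorphism to three entries of $\alpha$ that, by the structural observation, are three consecutive left-to-right maxima of $\sigma$ lying at positions $\le k$; applying ($iv$) to $\sigma$ gives a non-left-to-right-maximum $\sigma_t$ strictly between the positions of the last two of them and exceeding the first, and $t\le k$ forces $\sigma_t$ to be an entry of $\alpha$, so its image in $\pi$ witnesses ($iv$) for $\pi$.

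The delicate point --- and the step I expect to be the main obstacle --- is condition ($ii$): that $k-1$ is a non-left-to-right-maximum of $\pi$, equivalently that the second largest value $M'$ of $\alpha$ is a non-left-to-right-maximum of $\alpha$. I would argue by contradiction. If $M'$ were a left-to-right maximum of $\alpha$, then being the second largest value of $\alpha$ it must be the second largest left-to-right maximum of $\alpha$, say $\mu_{r-1}$, the largest being $\mu_r=\max(\alpha)$. The entries $\mu_{r-1}$, $\mu_r$, $n$ are then three consecutive left-to-right maxima of $\sigma$, so ($iv$) for $\sigma$ produces a non-left-to-right-maximum $\sigma_t$ with $t<k+1$ and $\sigma_t>\mu_{r-1}=M'$; but then $\sigma_t$ is an entry of $\alpha$ strictly exceeding the second largest value of $\alpha$, which forces $\sigma_t=\max(\alpha)=\mu_r$, impossible since $\mu_r$ already occurs, to the left of position $t$. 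The remaining cases are immediate: if $\alpha$ has at most one left-to-right maximum then every other entry of $\alpha$ --- in particular $M'$, when $k\ge 2$ --- is a non-left-to-right-maximum, and $k\le 1$ is degenerate (with $k=0$ giving an empty $\pi$). With ($i$)--($iv$) verified for $\pi$, Theorem~\ref{thm2} gives $\pi\in\B_{p-n+k}$, which completes the proof.
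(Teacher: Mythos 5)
Your proof is correct and follows essentially the same route as the paper's: decompose $\sigma=\alpha\, n\, \beta$, count the non-left-to-right-maxima contributed by $\beta$, and verify conditions ($i$)--($iv$) of Theorem~\ref{thm2} for $\pi$, the key point ($ii$) being obtained by applying condition ($iv$) of $\sigma$ to the last three left-to-right maxima (you phrase this step by contradiction, the paper argues directly, but it is the same use of ($iv$)). Your explicit handling of the degenerate cases ($k\le 1$, or $\alpha$ having a single left-to-right maximum) is a minor addition that the paper leaves implicit.
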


\begin{proof}Any permutation $\sigma$ can be uniquely written as $\sigma=\alpha n\beta$ where $\alpha$ and $\beta$ are two subsequences of $\interval{n-1}$. Let $k$ be the length of $\alpha$ and let $\pi=\pi_1\pi_2\ldots \pi_k$ be the permutation of $\interval{k}$ that is isomorphic to the subsequence $\alpha$. Let us prove that $\pi$ is minimal.

 Since $\sigma$ is minimal, it satisfies the three conditions ($ii$), ($iii$) and ($iv$) of Theorem~\ref{thm2}.
 Since all elements in $\beta$ are non-left-to-right-maxima in $\sigma$, $\pi$ contains exactly $p-(n-1-k)$ non-left-to-right-maxima and thus, $n-p-1$ left-to-right maxima.

- The condition ($iii$) of Theorem~\ref{thm2} on $\sigma$ does not involve the part $n\beta$. Therefore, $\pi$ satisfies ($iii$).

- The deletion of $n\beta$ from $\sigma$ preserves the condition ($iv$) on $\pi$. Thus, $\pi$ satisfies ($iv$).

- Let $\sigma_i$, $\sigma_j$ and $\sigma_{k+1}=n$, $1\leq i<j\leq k$, be the last three left-to-right maxima of $\sigma$. After the deletion of $n\beta$, the two left-to-right maxima of $\sigma$, $\sigma_i$ and $\sigma_j$, are respectively transported in $\pi$ into $\pi_i$ and $\pi_j=k$. Condition ($iv$) on $\sigma$ ensures that there is $\sigma_\ell$, between $\sigma_j$ and $n$ such that $\sigma_\ell>\sigma_i$. The greatest value $\sigma_\ell$ satisfying this property is then transported in $\pi$ into $k-1$, which proves that $k-1$ is on the right of $k$ in $\pi$. Thus, $\pi$ satisfies ($ii$).

Using Theorem~\ref{thm2}, the permutation $\pi$ is a basis permutation with $p+\ell-n+1$ non-left-to-right-maxima.\end{proof}

\begin{thm}[An infinite recursion]\label{thm3}
The number $b_{n,p}$ of basis permutations of length $n$ in $\B_p$ (or equivalently having exactly $p+1$ non-left-to-right-maxima) is given by the following recurrence relation (for $p<n-2$):
$$b_{n,p}=\sum_{\ell=0}^{p-1}(\ell+1)!\cdot{ \binom{n-2}{\ell}} \cdot b_{n-\ell-2,p-\ell-1}$$
anchored with $b_{n,p}=0$ if $p<(n-2)/2$ or $p > n-2 $, and $b_{n,n-2}=(n-1)!$ for $n>1$.
\end{thm}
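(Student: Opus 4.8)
The plan is to treat the boundary values with Corollary~\ref{cor2}, and then to obtain the recurrence by turning the decomposition of Lemma~\ref{lem3} into a bijection. For the anchors: by Corollary~\ref{cor2} a permutation of $\B_p$ has length between $p+2$ and $2(p+1)$, which, read in terms of the length $n$, already gives $b_{n,p}=0$ for $p>n-2$ and for $p<(n-2)/2$; and when $p=n-2$ a basis permutation has $n-1$ non-left-to-right-maxima, hence only one left-to-right maximum, which forces $\sigma_1=n$, while conversely every $n\tau$ with $\tau\in\S_{n-1}$ satisfies conditions ($i$)--($iv$) of Theorem~\ref{thm2} (($ii$) and ($iii$) because $n$ comes first, ($iv$) vacuously), so $b_{n,n-2}=(n-1)!$.

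For the recurrence, fix $p<n-2$ and take $\sigma\in\B_p\cap\S_n$, written $\sigma=\alpha\,n\,\beta$ with $|\alpha|=k$; set $\ell:=n-k-2$, so $|\beta|=\ell+1$. Condition ($iii$) excludes $\sigma_2=n$, so $k\neq1$; and $k=0$ would force $p=n-2$, excluded; hence $k\ge2$ and $\ell\ge0$, while condition ($ii$) forces $n-1\in\beta$. The $\ell+1$ entries of $\beta$ lie to the right of $n$, so they are non-left-to-right-maxima of $\sigma$; therefore the permutation $\pi\in\S_k$ order-isomorphic to $\alpha$ has $p-\ell$ non-left-to-right-maxima, and Lemma~\ref{lem3} makes it a basis permutation, i.e.\ $\pi\in\B_{p-\ell-1}\cap\S_{n-\ell-2}$ (in particular $0\le\ell\le p-1$). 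I would then show that
$$\sigma\ \longmapsto\ \bigl(\,\ell,\ \pi,\ \{\text{values of }\beta\},\ \text{the word }\beta\,\bigr)$$
is a bijection from $\B_p\cap\S_n$ onto the set of data consisting of an $\ell\in\{0,\dots,p-1\}$, a permutation $\pi\in\B_{p-\ell-1}\cap\S_{n-\ell-2}$, a value set for $\beta$ equal to $\{n-1\}$ together with an arbitrary $\ell$-subset of $\interval{n-2}$, and an arbitrary ordering $\beta$ of that set; since these three choices are independent and counted respectively by $b_{n-\ell-2,p-\ell-1}$, $\binom{n-2}{\ell}$ and $(\ell+1)!$, summing over $\ell$ gives exactly the asserted recurrence. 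Injectivity is immediate, since $\ell$, the word $\beta$, and the pattern $\pi$ of $\alpha$ are all read off $\sigma$, and the value set of $\alpha$ is then forced as the complement of that of $\beta$.

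The heart of the proof — and the step I expect to be most delicate — is surjectivity: given admissible data one reconstructs $\alpha$ as the unique word on its value set order-isomorphic to $\pi$, puts $\sigma:=\alpha\,n\,\beta$, and must check $\sigma\in\B_p$ through Theorem~\ref{thm2}. Conditions ($i$) (one counts $(p-\ell)+(\ell+1)=p+1$ non-left-to-right-maxima), ($ii$) ($n-1\in\beta$ is to the right of $n$) and ($iii$) ($\sigma_2=\alpha_2$ is a non-left-to-right-maximum of $\alpha$ exactly when $\pi_2$ is one of $\pi$, which holds because $\pi$ satisfies ($iii$)) are straightforward. Condition ($iv$) is where the work lies: a triple of consecutive left-to-right maxima of $\sigma$ lying entirely inside $\alpha$ is taken care of by ($iv$) for $\pi$, but there is one new triple, topped by $n$, whose two lower members are the last two left-to-right maxima $\sigma_i,\sigma_j$ of $\alpha$, and one needs a non-left-to-right-maximum of $\sigma$ strictly between the positions of $\sigma_j$ and of $n$ — hence lying in $\alpha$ — that exceeds $\sigma_i$. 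This is precisely where condition ($ii$) for $\pi$ enters: in $\pi$ the value $k-1$ is a non-left-to-right-maximum, so it appears to the right of the maximum $k=\pi_j$ (and $k$ is not in the last position, which also prevents this triple from being degenerate), and $k-1>\pi_i$ since the left-to-right maximum $\pi_i<k$ cannot be $k-1$; the entry of $\alpha$ carrying the value $k-1$ is then the sought witness (and if $\alpha$ has a single left-to-right maximum the triple is absent and ($iv$) is vacuous). With ($iv$) settled, $\sigma\in\B_p$, surjectivity follows, and the recurrence is established; the dovetailing of condition ($ii$) of Theorem~\ref{thm2} for $\pi$ with condition ($iv$) for $\sigma$ is the real crux.
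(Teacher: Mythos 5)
Your proposal is correct and follows essentially the same route as the paper: the unique decomposition $\sigma=\alpha\, n\, \beta$, Lemma~\ref{lem3} for the forward direction, and the independent choices counted by $\binom{n-2}{\ell}\,(\ell+1)!\,b_{n-\ell-2,p-\ell-1}$. The only difference is one of detail: you spell out the converse (surjectivity) check against Theorem~\ref{thm2} — in particular condition ($iv$) for the new triple topped by $n$, obtained from condition ($ii$) for $\pi$ — where the paper simply asserts that the reconstruction yields a basis permutation.
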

\begin{proof}Any permutation $\sigma$ of length $n\geq 1$ contains at least one left-to-right maximum and thus, at most $n-1$ non-left-to-right-maxima which implies that $b_{n,p}=0$ for $n\leq p+1$. Using the proof of
Corollary~\ref{cor2}, we also have $b_{n,p}=0$ for $n> 2(p+1)$. Moreover, the basis permutations of length $n$ with $n-1$ non-left-to-right-maxima are the permutations of the form $n \alpha$ where $\alpha\in \S_{n-1}$. So, we have $b_{n,n-2}=(n-1)!$ for $n>1$.

Now, let us prove the recursive relation. Let $\sigma\in \S_n$ be a basis permutation with $p+1$ non-left-to-right-maxima. We consider its unique decomposition $\sigma=\alpha n \beta$ where $\alpha$ and $\beta$ are some subsequences of $\interval{n-1}$.
Let $\ell+2$, $\ell\geq 0$, be the length of $n\beta$ and let $\pi$ be the permutation in $\S_{n-\ell-2}$ isomorphic to $\alpha$. Using Lemma~\ref{lem3} with $k=n-\ell-2$, $\pi$ is minimal with $p-\ell-1$ non-left-to-right-maxima.
So, we can associate to $\sigma=\alpha n\beta$ the pair $(\pi,\gamma)$ where $\pi\in \S_{n-\ell-2}$ is minimal with $p-\ell-1$ non-left-to-right-maxima and $\gamma\in \S_{\ell+1}$ is isomorphic to $\beta$.

Conversely, let $\pi$ be a basis permutation of length $n-\ell-2$ with $p-\ell-1$ non-left-to-right-maxima and $\gamma\in \S_{\ell+1}$. We construct a basis permutation $\sigma$ of length $n$ with $p+1$ non-left-to-right-maxima as follows.
From $\gamma\in \S_{\ell+1}$, we construct a subsequence $\beta$ of $\interval{n-1}$ of length $\ell+1$ such that $\beta$ contains the value $n-1$ and such that $\beta$ is isomorphic to $\gamma$. Since $n-1$ belongs to $\beta$,
its position in $\beta$ also is the position of the greatest value of $\gamma$. So, $\beta$ is characterized by the choice of $\ell$ values among $\interval{n-2}$. Now, we define the unique subsequence $\alpha$ of $\interval{n-2}\backslash X$ isomorphic to $\pi$ where $X$ is the set of values used in $\beta$.
This construction ensures that $\sigma=\alpha n\beta$ is a basis permutation of length $n$ with $p+1$ non-left-to-right-maxima, and so $\sigma\in\B_p$.
So, there are $\binom{n-2}{\ell}$ possibilities to choose the values of $\beta$ and $(\ell+1)!$ possibilities to choose $\gamma$ and $b_{n-\ell-2,p-\ell-1}$ possibilities to choose a basis permutation $\pi\in \S_{n-\ell-2}$ with $p-\ell-1$ non-left-to-right-maxima.
Varying $\ell$ from 0 to $p-1$, we obtain the recursive formula.
\end{proof}

Theorem~\ref{thm3} allows us to find the bivariate exponential generating function for the number of basis permutations according to the number of non-left-to-right-maxima.

\begin{thm}[Closed-form for the bivariate generating function] \label{thm4}
Consider the bivariate exponential generating function
$B(x,y)=\sum_{n\geq 0,p\geq 0}b_{n,p}\frac{x^ny^p}{n!}$  where the coefficient of $\frac{x^ny^p}{n!}$ is the number $b_{n,p}$ of basis permutations of length $n$ in $\B_p$. Then, we have
$$B(x,y)= \frac{1}{2y}\left(1-\frac{1}{V}\right)  (1-xy)^{\frac{1}{2}(1+V)}
+\frac{1}{2y}\left(1+\frac{1}{V}\right) (1-xy)^{\frac{1}{2}(1-V)}-\frac{1}{y}\,,$$
$$ \text{where } V:=\sqrt{1+4/y}\,.$$
\end{thm}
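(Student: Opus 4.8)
The plan is to transform the recurrence of Theorem~\ref{thm3} into a linear differential equation for $B(x,y)$ and then solve that equation explicitly. First I would multiply the recurrence $b_{n,p}=\sum_{\ell=0}^{p-1}(\ell+1)!\binom{n-2}{\ell}b_{n-\ell-2,p-\ell-1}$ by $\frac{x^{n-2}y^p}{(n-2)!}$ and sum over all pairs $(n,p)$ for which it is valid, namely $0\le p\le n-3$; the left-hand side is then $\partial_x^2 B$ with the ``anchor strip'' $p=n-2$ removed. On the right-hand side, the substitution $m=n-\ell-2$, $q=p-\ell-1$ turns the coefficient $\frac{(\ell+1)!}{(n-2)!}\binom{n-2}{\ell}$ into $\frac{\ell+1}{m!}$, so the triple sum factorises as $\bigl(\sum_{\ell\ge 0}(\ell+1)(xy)^\ell\bigr)\cdot y\cdot\bigl(\sum_{m,q}\frac{b_{m,q}}{m!}x^my^q\bigr)=\frac{y}{(1-xy)^2}\,B(x,y)$ --- here one checks that the range constraint $q\le m-2$ coming from $p\le n-3$ coincides with the support of $b_{m,q}$, so the inner sum is genuinely the full $B$. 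The anchor strip, where $b_{n,n-2}=(n-1)!$, contributes $\sum_{n\ge 2}(n-1)(xy)^{n-2}=\frac{1}{(1-xy)^2}$. Assembling the pieces yields the differential equation
\[(1-xy)^2\,\frac{\partial^2 B}{\partial x^2}=yB+1,\]
and since there is no basis permutation of length $0$ or $1$ we also have $B(0,y)=0$ and $\partial_x B(0,y)=0$. As $(1-xy)^2$ has constant term $1$ as a series in $x$, this equation together with these two conditions determines $B$ coefficient by coefficient in $x$, so it suffices to exhibit one solution.

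To solve, I would substitute $u=1-xy$, so that $\partial_x=-y\,\partial_u$ and the equation becomes the inhomogeneous Euler (equidimensional) equation $u^2 B_{uu}-\frac1y\,B=\frac1{y^2}$. Its indicial equation is $r(r-1)=\frac1y$, with roots $r_\pm=\frac12(1\pm V)$ where $V=\sqrt{1+4/y}$, and the constant $-\frac1y$ is a particular solution, so $B=A\,(1-xy)^{(1+V)/2}+C\,(1-xy)^{(1-V)/2}-\frac1y$. Imposing $B=0$ and $B_u=0$ at $u=1$ gives the linear system $A+C=\frac1y$ and $A(1+V)+C(1-V)=0$, whose solution $A=\frac{1}{2y}\left(1-\frac1V\right)$, $C=\frac{1}{2y}\left(1+\frac1V\right)$ is exactly the asserted closed form.

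The step I expect to be the main obstacle is the bookkeeping around the boundary data of the recurrence: one has to verify carefully that summing the recurrence over precisely $0\le p\le n-3$, adding back the anchor strip $p=n-2$, and using the vanishing of $b_{n,p}$ outside $\frac{n-2}{2}\le p\le n-2$, reconstitutes $\partial_x^2 B$ with inhomogeneous term exactly $1$ --- the index shifts $m=n-\ell-2$, $q=p-\ell-1$ and the matching of the summation ranges both have to be tracked with care. A secondary point worth a remark is that, although $V$ and $1/V$ are not rational in $y$, the right-hand side of the formula nonetheless expands as a power series in $x$ with polynomial coefficients in $y$, in agreement with the definition of $B$; this is forced by the uniqueness just invoked, but can also be checked directly.
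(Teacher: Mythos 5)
Your proposal is correct and follows essentially the same route as the paper: translate the recurrence of Theorem~\ref{thm3} into the second-order linear ODE $(1-xy)^2\,\partial_x^2 B = yB+1$ with $B(0,y)=\partial_x B(0,y)=0$, solve it via the Euler-type equation whose exponents are $\frac12(1\pm V)$ together with the constant particular solution, and fix the two constants by the initial conditions. The only cosmetic difference is that you work directly with $B(x,y)$ (handling the anchor strip $b_{n,n-2}=(n-1)!$ as the inhomogeneous term), whereas the paper first passes to $F(x,y)$ with $B(x,y)=F(xy,1/y)$ so that the anchor appears through $F_2(x)=-\ln(1-x)-x$; the computations are equivalent, and your explicit uniqueness remark (coefficient-by-coefficient determination in $x$) is a welcome justification that the paper leaves implicit.
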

\begin{proof}
Setting $F_p(x):=\sum_{n\geq 0} b_{n,n-p}\frac{x^n}{n!}$ and  $F(x,y):=\sum_{p\geq 0}F_p(x)y^p$,
we have $B(x,y)=F(x y,1/y)$.
(We work with the generating function $F_p(x)$ of the ($b_{n,n-p}$)'s  rather than the generating function of the ($b_{n,p}$)'s
because then the derivation of the proof is simpler to write).

Taking the second derivative of $F(x,y)$ with respect to $x$ gives
\begin{eqnarray}\label{diff1}\partial_x^2 F(x,y)
=
\partial_x^2 \left(\sum_{p\geq 0}  F_p(x) y^p \right)= \partial_x^2 F_0(x) + \partial_x^2F_1(x) y+ \partial_x^2 F_2(x) y^2+ \sum_{p\geq 3} \partial_x^2 F_p(x) y^p\,.
\end{eqnarray}

Now, the recursive relation of Theorem~\ref{thm3} for $b_{n+2,n-p+2}$ implies for $p\geq 3$:
\begin{eqnarray*}
\partial_x^2  F_p(x) &=&\sum\limits_{n\geq 0} b_{n+2,n-p+2}\frac{x^n}{n!}
=\sum_{n\geq 0}\frac{x^n}{n!}\sum_{\ell=0}^{n-p+1}(\ell+1)!\cdot\binom{n}{\ell}\cdot b_{n-\ell,(n-\ell)-(p-1)}\\
&=& \sum_{n\geq 0}(n+1)!\frac{x^n}{n!}\cdot\sum_{n\geq 0}b_{n,n-p+1}\frac{x^n}{n!}
= \frac{1}{(1-x)^2}F_{p-1}(x).
\end{eqnarray*}
Plugging this recurrence into the 
equation~\eqref{diff1} (and using $F_0(x)=F_1(x)=0$)
gives:
\begin{eqnarray*}\partial_x^2 F(x,y)
&=&\partial_x^2 F_2(x) y^2+ \sum_{p\geq 2}  \frac{y}{(1-x)^2} F_{p}(x) y^{p}.
\end{eqnarray*}
It remains to simplify $F_2(x)$; the initial conditions of Theorem~\ref{thm3} ($b_{n,n-2}=(n-1)!$ and  $b_{n,p}=0$ for $n<2$) imply that
$$F_2(x)=\sum_{n\geq 2} b_{n,n-2}\frac{x^n}{n!}=\sum_{n\geq 2} \frac{(n-1)!}{n!}x^n=-\ln(1-x)-x\,.$$
This leads to the main differential equation:
\begin{eqnarray}\label{diffeqF} \partial_x^2  F(x,y) &=& \partial_x^2   F_{2}(x) y^2 +\frac{y}{(1-x)^2}F(x,y)
= \frac{y}{(1-x)^2}(y+F(x,y))\,.
\end{eqnarray}

First, by plug \& prove, the solutions of $\partial_x^2 F(x)= y F(x)/(1-x)^2$
are a linear combination of $(1-x)^\alpha$, with $\alpha=(1+\sqrt{1+4y})/2$, or $\alpha=(1-\sqrt{1+4y})/2$.
Now, $F(x,y)=-y$ is a trivial particular solution of the non-homogeneous differential equation~\eqref{diffeqF},
so the general solution of this differential equation is of the form:
$$F(x,y)=K(y)\cdot (1-x)^{\frac{1}{2}(1+\sqrt{1+4y})}+L(y)\cdot (1-x)^{\frac{1}{2}(1-\sqrt{1+4y})}-y.$$

Since $F(0,y)=0$ and $\frac{\partial F(x,y)}{\partial x}\rfloor_{x=0}=0$, we respectively deduce the two equations
\[K(y)+L(y)-y=0  \text{\qquad and \qquad} -K(y)\cdot (1+\sqrt{1+4y})-L(y)\cdot (1-\sqrt{1+4y})=0\,,\]
 thus we obtain
\[K(y)=\frac{y}{2}\frac{\sqrt{1+4y}-1}{\sqrt{1+4y}}\qquad \mbox{ and } \qquad L(y)=\frac{y}{2}\frac{\sqrt{1+4y}+1}{\sqrt{1+4y}}\,. \]
This gives
$$F(x,y)=\frac{y}{2}\frac{\sqrt{1+4y}-1}{\sqrt{1+4y}} \cdot (1-x)^{\frac{1}{2}(1+\sqrt{1+4y})}+\frac{y}{2}\frac{\sqrt{1+4y}+1}{\sqrt{1+4y}}\cdot (1-x)^{\frac{1}{2}(1-\sqrt{1+4y})}-y\,,$$
and therefore the theorem, as $B(x,y)= F(xy,1/y)$.\qedhere
\end{proof}

\begin{thm}[Asymptotics] \label{thm5} The exponential generating function for the number of basis permutations with respect to their length is given by
\begin{eqnarray*}
B(x) &= &B(x,1)=\sum_{n\geq 0} b_n \frac{x^n}{n!} = \frac{\sqrt{5}-1}{2\sqrt{5}}\cdot(1-x)^{\frac{1+\sqrt{5}}{2}}+\frac{\sqrt{5}+1}{2\sqrt{5}}\cdot(1-x)^{\frac{1-\sqrt{5}}{2}}-1\\
 &=&  \frac{x^2}{2!}+  2 \frac{x^3}{3!}+ 7 \frac{x^4}{4!}+  32\frac{x^5}{5!} +  179 \frac{x^6}{6!}+ 1182 \frac{x^7}{7!} +  8993 \frac{x^8}{8!} +O(x^9).
\end{eqnarray*}
It is a D-finite transcendental function satisfying the following differential equation
\begin{equation}\label{diffeq1} B(x) - (1-x)^2 \partial_x^2 B(x) +1 =0, \qquad B(0)= B'(0)=0\,.\end{equation}
Equivalently, its coefficients $b_n$ satisfy the recurrence
\begin{equation} \label{nicerec} b_{n+2} = 2 n  b_{n+1}+(1+n-n^2) b_n, \qquad b_0=b_1=0, b_2=1\,,\end{equation}
and the asymptotics are given by
$$ \frac{b_n}{n!} \sim \frac{\phi}{\sqrt 5 \, \Gamma(\phi-1)} \frac{1}{n^{2-\phi} }(1+o(1))\,, $$
where $\phi$ is the golden ratio $\phi=(1+\sqrt 5)/2$, and $\Gamma(z):=\int_0^{+\infty} t^{z-1} \exp(-t) dt$ is the Euler gamma function.

Accordingly, a permutation of length $n$ has a probability asymptotically 0 to be an element of the basis of forbidden patterns,
however, this probability is not ``very small'' as it decays only polynomially:
$$\operatorname{Prob}( s\in \S_n \text{ belongs to } \cup_{p\in {\mathbb N}} \B_p) =  \frac{b_n}{n!} \approx 0.499/n^{0.381} (1+o(1))\,.$$
\end{thm}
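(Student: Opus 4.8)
The plan is to deduce everything from the closed form of Theorem~\ref{thm4} by the specialization $y=1$, followed by a standard singularity analysis. First I would set $y=1$, so that $V=\sqrt{1+4}=\sqrt5$ and $1-xy$ becomes $1-x$; a one-line simplification of $\tfrac12\bigl(1\mp 1/\sqrt5\bigr)$ into $\tfrac{\sqrt5\mp1}{2\sqrt5}$ then gives the announced expression for $B(x)=B(x,1)$, and Taylor-expanding it to order~$8$ checks the coefficients $1,2,7,32,179,1182,8993$. For the differential equation~\eqref{diffeq1} I would simply put $y=1$ in the main differential equation~\eqref{diffeqF}: since $B(x,1)=F(x,1)$, this yields $(1-x)^2\partial_x^2 B(x)=1+B(x)$ at once, with $B(0)=B'(0)=0$ inherited from the construction. (One can alternatively verify the closed form directly, using that $(1-x)^\alpha$ solves $(1-x)^2g''=\alpha(\alpha-1)g$ and that both $\alpha=\phi$ and $\alpha=1-\phi$ satisfy $\alpha(\alpha-1)=1$ because $\phi^2=\phi+1$.) The recurrence~\eqref{nicerec} then comes out by extracting the coefficient of $x^n/n!$ from~\eqref{diffeq1}, using $(1-x)^2=1-2x+x^2$ together with $[x^n/n!]\,\partial_x^2B=b_{n+2}$, $[x^n/n!]\,x\partial_x^2B=nb_{n+1}$, and $[x^n/n!]\,x^2\partial_x^2B=n(n-1)b_n$; the $n=0$ term, where the constant $+1$ enters, pins down $b_2=1$.

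D-finiteness is immediate from~\eqref{diffeq1}. For transcendence I would invoke that an algebraic function over $\mathbb{C}(x)$ has finitely many branches and only rational Puiseux exponents at its singularities, whereas $B$ is singular at $x=1$ with the irrational exponent $1-\phi=(1-\sqrt5)/2$. Concretely, a single loop around $x=1$ scales the two singular terms by $e^{2\pi i\phi}$ and $e^{2\pi i(1-\phi)}$ respectively, and since these are distinct and not roots of unity, the monodromy orbit of $B$ at $x=1$ is infinite --- impossible for an algebraic function. (The coefficient asymptotics obtained below, carrying the irrational power $n^{\phi-2}$, give an alternative obstruction to algebraicity.)

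For the asymptotics I would apply the Flajolet--Odlyzko transfer theorem: the unique singularity of $B$ on $|x|=1$ is $x=1$, and the explicit form of $B$ makes it analytic in a slit disk ($\Delta$-domain) there. Of the two singular terms, $(1-x)^\phi$ has exponent $\phi\in(1,2)$ and so contributes only $O(n^{-\phi-1})=o(n^{\phi-2})$, while the constant $-1$ is entire; hence the asymptotics is governed by $\tfrac{\sqrt5+1}{2\sqrt5}(1-x)^{1-\phi}$, giving $[x^n]B(x)=b_n/n!\sim \tfrac{\sqrt5+1}{2\sqrt5}\cdot\tfrac{n^{\phi-2}}{\Gamma(\phi-1)}$, which is the stated estimate after using $\tfrac{\sqrt5+1}{2\sqrt5}=\tfrac{\phi}{\sqrt5}$ and $\phi-2=-(2-\phi)$, with numerical value $\tfrac{\phi}{\sqrt5\,\Gamma(\phi-1)}\approx 0.499$. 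The probability statement is then immediate: by Theorem~\ref{thm2}, condition~($i$), each basis permutation of length $n$ lies in exactly one $\B_p$, so $b_n=\sum_p b_{n,p}$ counts $\bigl|\bigcup_p\B_p\cap\S_n\bigr|$, and a uniform $\sigma\in\S_n$ belongs to $\bigcup_p\B_p$ with probability $b_n/n!=[x^n]B(x)$.

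I do not expect a genuine obstacle here, since Theorem~\ref{thm4} already hands us the closed form and $B(x)$ is then a textbook case for singularity analysis. The only points needing a little care are checking that the $(1-x)^\phi$ contribution and the lower-order correction terms are genuinely subdominant (routine once one notes $\phi>1$), and stating the transcendence argument cleanly --- I would use the finiteness-of-branches / irrational-monodromy-exponent formulation rather than any ad hoc algebraic manipulation.
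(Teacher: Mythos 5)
Your proposal is correct and follows essentially the same route as the paper: specialize Theorem~\ref{thm4} at $y=1$ to get the closed form, read off the differential equation, extract coefficients to obtain the recurrence~\eqref{nicerec}, and apply singularity analysis to the terms $(1-x)^{a}$ (with the dominant contribution coming from the exponent $\frac{1-\sqrt5}{2}=1-\phi$, whose transfer gives the constant $\frac{\phi}{\sqrt5\,\Gamma(\phi-1)}\approx 0.499$ and the exponent $n^{\phi-2}$). The only genuine differences are minor and both to your credit: you obtain~\eqref{diffeq1} by setting $y=1$ directly in the intermediate equation~\eqref{diffeqF} (using $B(x,1)=F(x,1)$), or by checking that $\alpha(\alpha-1)=1$ for $\alpha=\phi,1-\phi$, whereas the paper invokes closure properties of D-finite functions (Gfun-style) starting from the building blocks $(1-x)^a$; and you actually prove the transcendence claim, via the infinite monodromy orbit at $x=1$ forced by the irrational exponents (or, equivalently, the irrational power of $n$ in the coefficient asymptotics), a point the paper asserts without an explicit argument and only touches upon in the remark following the theorem. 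One small presentational point you handled correctly: the displayed asymptotic formula is to be read for $b_n/n!=[x^n]B(x)$, consistently with the probability statement, and your derivation makes that explicit.
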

\begin{proof}
Setting $y=1$ in the bivariate exponential generating function given in Theorem~\ref{thm4} gives $B(x)$.
If a function $B(z)=\sum b_n z^n$ is D-finite (it is satisfying a linear differential equation, with polynomial coefficients in $z$),
then its coefficients $b_n$ are polynomially recursive (in short, P-recursive): they satisfy a linear recurrence, with polynomial coefficients in $n$.
See e.g.~\cite{FlSe, S} for more on these two equivalent notions. Starting from the building blocks $(1-x)^a$, which are D-finite, and then using the closure properties of D-finite functions (by sum and product)
gives the differential equation~\eqref{diffeq1} (this is e.g.~implemented in the Gfun Maple package, see~\cite{SalvyZimmermann1994}).
The recurrence is obtained by extracting the coefficient of $x^n$ on both sides of the differential equation.
The asymptotics follows from a singularity analysis (see~\cite{FlSe}) on each term of the shape $(1-x)^a$, indeed, for any $a\in \mathbb{R}$ which is not an integer, one has:
\[ [x^n] (1-x)^a =    \frac{1}{\Gamma(-a) n^{1+a}}  \left(1+ \frac{1}{2} a (a+1)\,  \frac{1}{n} + O(\frac{1}{n^2})\right)\,. \qedhere \]
\end{proof}

Note that $B(x,y)$ is D-finite in the variable $x$:
\[ 1+ y B(x,y) - (1-xy)^2 \partial_x^2 B(x,y) =0
\text{\qquad with  $B(0,y)= (\partial_xB)(0,y)=0$,} \]
 but it is not D-finite in the variable $y$.
This follows from a saddle point analysis on $B(1,y)=\sum_n \beta_n y^n$, indeed the asymptotics of $\beta_n$ involve arbitrarily large $(\ln n)^d$,
while the asymptotics of a D-finite function can only have a finite sum of such powers of log, see~\cite{FlSe}.
This argument is thus similar to a proof that $(1-y)^{1-y}$ is not D-finite.

 \bigskip

{\bf Remark} [Irrational critical exponent]: It is very seldom that a combinatorial problem leads to some asympotics involving an irrational number as exponent.
In fact, in combinatorics and in statistical physics, most of the asymptotics of integer sequences are of the shape $b_n \sim C n^\alpha A^n$,
and the exponent $\alpha$ which appears there is a key quantity: its value is often the signature of some universal phenomena (in physics, it is called a critical exponent).
For D-finite sequences, the theory implies that it is an algebraic number, 
however, this exponent is very often -3/2, or a dyadic number (for the reasons explained in~\cite{BanderierDrmota}), or a rational number (due to a result on G-functions).
Indeed, a theorem (resulting from the works of Katz, Andr\'e, Chudnovsky \& Chudnovsky, see~\cite{ChambertLoir}) states
that G-functions (D-finite functions with integer coefficients and non-zero radius of convergence) have a rational critical exponent.
Now, instead of considering the exponential generating function $B(x)=\sum b_n x^n/n!$,
we may consider its inverse Borel transform, {\em i.e.}, the ordinary generating function $\sum b_n x^n$.
It is also a D-finite function, because D-finite functions are closed by Hadamard product,
and therefore the Borel transform (and the inverse Borel transform) of a D-finite function is D-finite
({\em i.e.}, if the sequence $b_n$ is P-recursive, so are $n! b_n$ and $b_n/n!$).
We have thus a new D-finite function with integer coefficients and irrational critical exponent (involving the golden ratio $\phi$),
but this is not contradicting the G-function theorem, because, due to the multiplication by $n!$, we now have a 0 radius of convergence.
In conclusion, we have the pleasure to have here one of the few examples in combinatorics of a problem leading to an irrational critical exponent.
Other examples are given via the KPZ formula in physics, or via quantities related to quadtrees, see~\cite{FlSe}.
\bigskip

Pushing further the asymptotics from Theorem~\ref{thm5}, we get the following limit law:

\begin{thm}[Limit Law]\label{Th6}
In the model where all permutations of length $n$ are equidistributed, a random permutation of length $n$ in $\cup_{p\in {\mathbb N}} \B_p$ is typically a member of $\B_p$,  for $p\sim n-(\ln n) /\sqrt{5}$, with Gaussian fluctuations.
Equivalently, the average number of left-to-right-maxima in a random basis permutation is  $p\sim (\ln n) /\sqrt{5}$ with Gaussian fluctuations.
\end{thm}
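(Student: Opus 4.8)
The plan is to extract from the bivariate generating function of Theorem~\ref{thm4} the distribution of the number of left-to-right maxima, and then to obtain the Gaussian limit law by a singularity analysis in $x$ (uniform in the catalytic variable) combined with Hwang's quasi-powers theorem. First I would introduce $G(x,u):=F(x,u)/u$, where $F(x,y)=\sum_{p}F_p(x)y^p$ is the auxiliary series from the proof of Theorem~\ref{thm4}, with $F_p(x)=\sum_n b_{n,n-p}x^n/n!$ and $F_0=0$. Since a length-$n$ permutation counted by $b_{n,n-p}$ lies in $\B_{n-p}$, it has $(n-p)+1$ non-left-to-right maxima (Theorem~\ref{thm2}), hence $p-1$ left-to-right maxima; therefore $G(x,u)=\sum_{p\geq 1}F_p(x)u^{p-1}$ is the exponential generating function of basis permutations in which $u$ marks the number of left-to-right maxima, and in particular $G(x,1)=B(x)$. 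Dividing the closed form of Theorem~\ref{thm4} by $u$ gives
$$G(x,u)=\frac{\sqrt{1+4u}-1}{2\sqrt{1+4u}}\,(1-x)^{\frac12(1+\sqrt{1+4u})}+\frac{\sqrt{1+4u}+1}{2\sqrt{1+4u}}\,(1-x)^{\frac12(1-\sqrt{1+4u})}-1 .$$

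For $u$ in a fixed complex neighbourhood of $1$, the branch of $\sqrt{1+4u}$ near $\sqrt5$ is analytic with positive real part, so $x\mapsto G(x,u)$ is $\Delta$-analytic with its only finite singularity at $x=1$, and near $x=1$ the summand with exponent $\tfrac12(1-\sqrt{1+4u})$ dominates (the other two stay bounded or tend to $0$). Since $\tfrac12(1-\sqrt{1+4u})$ stays near $\tfrac{1-\sqrt5}{2}\approx-0.618$, away from $\{0,1,2,\dots\}$, the Flajolet--Odlyzko transfer applies uniformly in $u$; writing $\alpha(u):=\tfrac12(\sqrt{1+4u}-3)$, this gives
$$[x^n]G(x,u)=c(u)\,n^{\alpha(u)}\bigl(1+o(1)\bigr),\qquad c(u):=\frac{\sqrt{1+4u}+1}{2\sqrt{1+4u}\;\Gamma\!\left(\tfrac{\sqrt{1+4u}-1}{2}\right)},$$
uniformly for $u$ near $1$ (this is the algebraic--logarithmic singularity schema with a moving exponent, cf.~\cite{FlSe}); at $u=1$ one recovers the asymptotics of $b_n$ from Theorem~\ref{thm5}.

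Consequently, the number $L_n$ of left-to-right maxima of a uniformly random basis permutation of length $n$ has probability generating function
$$\mathbb{E}\!\left[u^{L_n}\right]=\frac{[x^n]G(x,u)}{[x^n]G(x,1)}=\frac{c(u)}{c(1)}\,\exp\!\bigl((\alpha(u)-\alpha(1))\ln n\bigr)\bigl(1+o(1)\bigr),\qquad \alpha(u)-\alpha(1)=\tfrac12(\sqrt{1+4u}-\sqrt5),$$
uniformly for $u$ near $1$. This is exactly the quasi-powers shape with large parameter $\ln n$, analytic multiplier $c(u)/c(1)$ (equal to $1$ at $u=1$) and base $\exp(\alpha(u)-\alpha(1))$ (equal to $1$ at $u=1$); Hwang's theorem then yields a Gaussian limit law as soon as the variability constant $\alpha''(1)+\alpha'(1)$ is nonzero. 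Now $\alpha'(u)=(1+4u)^{-1/2}$ and $\alpha''(u)=-2(1+4u)^{-3/2}$, so $\alpha'(1)=1/\sqrt5$ and $\alpha''(1)+\alpha'(1)=3/(5\sqrt5)\neq0$; hence $\mathbb{E}[L_n]\sim(\ln n)/\sqrt5$, $\operatorname{Var}(L_n)\sim\tfrac{3}{5\sqrt5}\ln n$, and $(L_n-\mathbb{E}[L_n])/\sqrt{\operatorname{Var}(L_n)}$ converges in law to a standard Gaussian. Finally, among permutations of length $n$ the sets $\B_p$ are pairwise disjoint (Theorem~\ref{thm2}) and all but finitely many are empty (Corollary~\ref{cor2}), so a uniformly random length-$n$ element of $\bigcup_{p}\B_p$ is simply a uniformly random length-$n$ basis permutation; it belongs to $\B_p$ with $p=n-1-L_n$, an affine image of $L_n$, so this index $p$ is asymptotically Gaussian with mean $n-1-(\ln n)/\sqrt5\sim n-(\ln n)/\sqrt5$ and the same logarithmic variance, which is the assertion.

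The step I expect to be the main obstacle is justifying the uniformity in the catalytic variable $u$: one must check that a single $\Delta$-domain in $x$ can be chosen for all $u$ in a fixed neighbourhood of $1$ and that the $o(1)$ above is uniform there, so that the hypotheses of the quasi-powers theorem are genuinely met. The remaining ingredients — the closed form of $G$, the identification of the dominant singular term, and the two derivatives of $\alpha$ at $1$ — are routine.
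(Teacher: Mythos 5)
Your proposal is correct and follows essentially the same route as the paper: singularity analysis of the closed form of the bivariate generating function with a $u$-dependent (``movable'') exponent $\frac12(1-\sqrt{1+4u})$, uniformly for $u$ near $1$, followed by Hwang's quasi-powers theorem (the variable-exponent perturbation schema, Theorem IX.11 of Flajolet--Sedgewick, which is exactly what the paper invokes). You merely carry out explicitly the computations the paper leaves implicit (the normalization $G=F/u$ marking left-to-right maxima, and the constants $\alpha'(1)=1/\sqrt5$, $\alpha''(1)+\alpha'(1)=3/(5\sqrt5)$ giving the mean and variance), and these are all correct.
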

\begin{proof}
This follows from the closed-expression for $B(x,y)$, or from a singularity analysis of the differential equation.
Indeed, the average and standard deviation follow from the computation of $\partial_y B(x,y)$ and $\partial_y^2 B(x,y)$ 
at $y=1$.
The Gaussian limit law follows from the quasi-power theorem applied to a variable exponent perturbation or to our non-confluent differential equation (see Theorem IX.11 and Theorem IX.18 from~\cite{FlSe}).
\end{proof}

As a random permutation of $\S_n$ has $\ln n$ left-to-right maxima on average,
the above theorem quantifies to what extent the right-jump process kills the left-to-right maxima
when one starts from the identity permutation.

\newpage

\begin{figure}[ht]
\begin{center}\includegraphics[height=55mm]{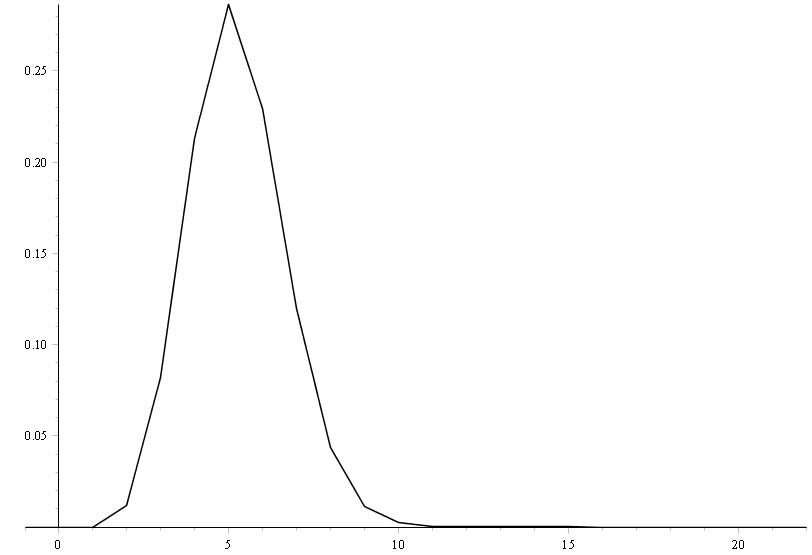}\vspace{-3mm}
\end{center}
\caption{This histogram illustrates Theorem~\ref{Th6}:
the average number of left-to-right-maxima in a random basis permutation is  $p\sim (\ln n) /\sqrt{5}$ with Gaussian fluctuations. 
However the speed of convergence and the small amplitude of the variance makes that large values of $n$ have to be considered 
to recognize clearly the bell curve of the Gaussian (above, this is a histogram for $n=4000$).}
\end{figure}

\smallskip

For the combinatorial structure ${\mathcal B}_p$, it could be possible that its complementary set has a nicer structure.
Those permutations not in the basis are for sure counted by $u_n=n!-b_n$;
this sequence satisfies
$u_{n+3}=(n+1)(n^2-n-1)u_n-(3n^2+3n-1) u_{n+1}+3(n+1) u_{n+2}$,
which is still a nice recurrence but of order one more than the recurrence for $b_n$,
so it is a heuristic confirmation than $b_n$ is a more fundamental sequence than $u_n$.

The first values of $b_n$  (the number of basis permutations of length $n$) are $1, 2, 7, 32$, $179$,  $1182$, $8993$, $77440$ for $2\leq n\leq 9$.
We added this sequence to the On-line Encyclopedia of Integer Sequences (hereafter abbreviated OEIS), see \cite{Slo}:

\begin{table}[hbt]
\begin{center}\scalebox{1.00}{
\begin{tabular}{c|cccccccccc|c}
$p\backslash n$ & 2 &  3 & 4 & 5 & 6 & 7 & 8 & 9 & 10 & 11  & $\# \B_p$ \\
\hline
0 & 1 & & & & & & & & & & 1 \\
\cline{2-12}
1 & & 2 & 1 & & & & & & & & 3\\
\cline{2-12}
2  && & 6 & 8 & 1 & & & & &  & 15 \\
\cline{2-12}
3  && & & 24 & 58 & 18 & 1 & &&  & 101 \\
\cline{2-12}
4  & && & & 120 & 444 & 244 & 32 & 1 &  & 841 \\
\cline{2-12}
5  & && & & & 720 & 3708 & 3104 & 700 & 50 &  8232 \\
\cline{2-12}
6  & & && & & & 5040 & 33984 & 39708 & 13400 & 78732 \\
\hline
$\Sigma$ & 1 & 2 & 7 & 32 & 179 & 1182 & 8993 & 77440 &744425& 7901410  &\\
\end{tabular}}
\end{center}
\caption{\label{tab1}Number $b_{n,p}$ of basis permutations of length $n$ (the "minimal forbidden patterns" of $\B_p$,
or equivalently, with $p+1$ non-left-to-right-maxima) where $2\leq n\leq 11$ and $0\leq p\leq 6$ \OEIS{A265163}.
The last column contains $\beta_p:=\sum_n b_{n,p}$ \OEIS{A265164};
the last line contains $b_n:=\sum_p b_{n,p}$ \OEIS{A265165}.}
\end{table}

There is a vast literature in number theory analysing the modular congruences of famous sequences (Pascal triangle, Fibonacci, Catalan, Motzkin, Ap\'ery numbers, see~\cite{Sagan, Xin, RowlandZeilberger, Kauers}).
The properties of $b_n \mod m$ are sometimes called "supercongruences" when $m$ is the power of a prime number: many
articles consider $m=2^r$, or $m=3^r$.
We now give a result which holds for any $m$ (not necessarily the power of a prime number).

\begin{thm}[Supercongruences for D-finite functions]\qquad\newline   \label{supercongru}
Consider any P-recurrence of order $r$: $$P_0(n) u_n = \sum_{i=1}^r  P_i(n) u_{n-i}\,.$$

If the polynomial $P_0(n)$ is ultimately invertible $\mod m$ ({\em i.e.,} $\gcd(P_0(n),m) =1 $, for all $n$ large enough),
then the sequence $(u_n)$  is ultimately periodic\footnote{In the sequel, we will omit the word "ultimately": a periodic sequence of period $p$ is thus a sequence for which  $u_{n+p} = u_n$ for all large enough $n$. Some authors use the terminology "eventually periodic" instead.}
 $\mod m$, and there is an algorithm to get this period.

In particular, recurrences such that $P_0(n)=1$ are periodic $\mod m$.

Accordingly, our sequence $b_n \mod m$ (defined by recurrence~\eqref{nicerec}) is periodic for any $m$.
\end{thm}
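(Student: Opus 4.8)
The plan is to reduce the recurrence modulo $m$ and observe that the whole computation then takes place in a finite set, so that the sequence must eventually cycle. First I would make the state-space argument precise. Fix $m$ and work in $\mathbb{Z}/m\mathbb{Z}$. By hypothesis there is an $N_0$ such that $\gcd(P_0(n),m)=1$ for all $n\geq N_0$; for such $n$ the element $P_0(n)$ is a unit in $\mathbb{Z}/m\mathbb{Z}$, so we may rewrite the recurrence as
\[
u_n \equiv P_0(n)^{-1}\sum_{i=1}^r P_i(n)\,u_{n-i}\pmod m,\qquad n\geq N_0.
\]
Thus for $n\geq N_0$ the value $u_n \bmod m$ is a function of the ``window'' $W_n:=(u_{n-1},u_{n-2},\dots,u_{n-r})\in(\mathbb{Z}/m\mathbb{Z})^r$ \emph{and} of $n$. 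The obstacle is precisely this residual dependence on $n$: the transition map is not autonomous. The standard fix is to absorb $n$ into the state by also tracking its residue modulo a suitable integer.

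The key observation is that the coefficients $P_0(n)^{-1}P_i(n)\bmod m$ are themselves eventually periodic functions of $n$. Indeed each $P_i$ is a polynomial with integer coefficients, so $P_i(n)\bmod m$ depends only on $n\bmod m$; and the inverse $P_0(n)^{-1}\bmod m$, being a function of $P_0(n)\bmod m$, likewise depends only on $n\bmod m$. Hence set $M:=m$ (or, to be safe, $M:=\operatorname{lcm}$ of $m$ and the relevant moduli) and define the \emph{extended state}
\[
S_n:=\bigl(u_{n-1},u_{n-2},\dots,u_{n-r},\ n\bmod M\bigr)\in(\mathbb{Z}/m\mathbb{Z})^r\times(\mathbb{Z}/M\mathbb{Z}).
\]
For $n\geq N_0$, the map $S_n\mapsto S_{n+1}$ is a \emph{fixed} function $\Phi$: it computes $u_n$ from the window and from $n\bmod M$ via the displayed formula, then shifts the window and increments the last coordinate. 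Since $\Phi$ acts on the finite set $(\mathbb{Z}/m\mathbb{Z})^r\times(\mathbb{Z}/M\mathbb{Z})$, the orbit $S_{N_0},S_{N_0+1},\dots$ is eventually periodic: there exist $N_1\geq N_0$ and $T\geq 1$ with $S_{n+T}=S_n$ for all $n\geq N_1$. Reading off the first coordinate gives $u_{n+T}\equiv u_n\pmod m$ for all $n\geq N_1$, i.e.\ $(u_n)$ is ultimately periodic modulo $m$. For the algorithmic claim, one simply iterates $\Phi$ from $S_{N_0}$ until a state recurs (at most $m^r M$ steps, a bound effective in $m$ and $r$); Floyd's cycle-detection or a hash table finds the pre-period and period. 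The case $P_0\equiv 1$ is immediate since then $\gcd(P_0(n),m)=1$ for every $n$ and every $m$, so no ``ultimately'' qualifier is even needed on the invertibility side.

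Finally, the application to $b_n$: the recurrence~\eqref{nicerec} is $b_{n+2}=2n\,b_{n+1}+(1+n-n^2)b_n$, which has leading polynomial $P_0\equiv 1$. By the statement just proved, $(b_n\bmod m)$ is periodic for every modulus $m$, prime-power or not. I expect the only genuine subtlety in writing this up to be the bookkeeping needed to state cleanly that the transition map becomes autonomous once $n$ is carried along modulo $M$; everything else is the pigeonhole principle on a finite set together with the trivial remark that polynomials respect congruences.
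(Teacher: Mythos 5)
Your proof is correct and follows essentially the same route as the paper: reduce the recurrence modulo $m$, make the transition autonomous by enlarging the state, and apply the pigeonhole principle on a finite set, with the algorithm being cycle detection on that finite automaton. The only cosmetic difference is that you carry $n \bmod m$ in the state while the paper carries the residues $(P_0(n),\dots,P_r(n)) \bmod m$ directly; these encode the same information, so the arguments coincide.
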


\begin{proof}
Indeed, as the leading term $P_0$ is invertible, we can write:
$$u_n \mod m = \sum_{i=1}^r  \frac{P_i(n) \mod m}{P_0(n) \mod m}  (u_{n-i} \operatorname{mod} m)\,,$$
in which each term has just a finite set of possible values.
What is more, for any polynomial  $P(n)$ with integer coefficients,  $P(n) \mod m$ is of period $p$, for some $p|m$.
(This follows from the fact that the sum and the product is preserving periodicity $\mod m$,
as we did not require in the definition of "period" that $m$ is the smallest $m$ such that the sequence is $m$ periodic).
Therefore, one can then construct a Markov chain (an automaton):
the states are all the possible $2r+1$-tuples of values $\mod m$  for 
$(P_0(n),\dots, P_r(n), u_{n-1},\dots,u_{n-r})$, and the recurrence dictates the transitions in this Markov chain.
The pigeonhole principle implies that there is  a loop in this finite graph, and our period is thus one of the divisors of the length of this loop.

Besides, the smallest period  $p$ and the first integer $n_1$ satisfying $u_n=u_{n+p} \mod m$ for $n\geq n_1$ 
are such that  $p+n_1$ is smaller than the number of states in the automaton,
i.e.~smaller than $m^{2r+1}$, or more precisely smaller than 
$\phi(m) m^{2r}$ where $\phi(m)$ is the Euler totient function (the number of invertible elements modulo~$m$, as $P_0(n)$ is ultimately invertible).
\end{proof}


This theorem explains the periodic behaviour of $b_n \mod m$.
By brute-force computation, we can get $b_n \mod m$, for any given $m$.
For example $b_n\mod 15$ is periodic of period $12$:
for $n\geq 9$, one has $$b_n\mod 15= (10, 5, 10, 10, 0, 10, 5, 10, 5, 5, 0, 5)^\infty.$$
The period can be quite large, for example $b_n\mod 3617$ has period $26158144$.
We say more on this phenomenon for P-recursive sequences in~\cite{BaLu17}.


\newpage
\section{Conclusion}

In this article, we analysed the iteration of the process "a particle jumps to the right" in a permutation,
and we gave the typical properties of the patterns which are not reached after $p$ moves.
We expect our approach (introducing a catalytic variable and getting a D-finite function) to work in many other cases.
However, we already know a nice permutation class for which the basis is  not D-finite.
Indeed, as an extension of this work,
an interesting question is to consider a model in which both right-jumps and left-jumps are allowed:
this is a very natural process, also related to sorting algorithms and bioinformatics processes.
In a forthcoming work, we show that for this new model,
the basis of forbidden minimal patterns for permutations obtained by $p$ iterations of the process
 is related to Young Tableaux with 2 equally long first rows
(but it is no longer D-finite, unlike the pure right-jump iteration process that we considered in this article).

Another natural question is: is it the case that using e.g.~the 
correspondence between records and cycles in permutations,
there is an elegant process corresponding to a "particle jumps to the right", with permutations at distance $p$ from identity being counted in terms of cycles in the permutation?
To get direct "bijective" proofs of our formulae is also an interesting question:
as a credo, it cannot be the case that such nice formulae/recurrences are only reached by solving differential equations (like we did in this article).
It may be the case that a generating tree approach leads to the simple recurrence~\eqref{nicerec} we get for $b_n$ (see e.g.~\cite{banderier} or~\cite{ECO}).
With respect to the asymptotics, it is noteworthy that the process analysed in the present article 
involves the golden ratio (this is very unusual in combinatorics to have this constant {\em as critical exponent}): is it the trace of some universality class? 
(Like it is sometimes the case for problems coming from statistical mechanics, see our remark after Theorem ~\ref{thm5}.)

Last but not least, we already mentioned a vast literature of publications in number theory
analysing the modular congruences of famous sequences (Pascal triangle, Fibonacci, Catalan, Motzkin, Ap\'ery numbers,~\dots,~\cite{Sagan, Xin}).
It seems to us that our approach to tackle them at the level of D-finite functions
is new (see also~\cite{Kauers, RowlandZeilberger}), and it would be worth 
analysing these properties in full generality.
In this article, we proved by a mixture of Ansatz and brute force proof that 
$b_n \mod m$ (where $m$ can be any integer) is a periodic function
(of period bounded by a polynomial in $m$). In fact, we prove in~\cite{BaLu17}  that 
this period and the values of $b_n \mod m$ for any given $m$ can be made explicit.


\bigskip

{\bf Acknowledgments:} We thank our laboratories (LE2I and LIPN) for funding this collaboration,
and  presentation of preliminary results at the conferences AofA'15 and Permutation Patterns'15.
We also thank the two referees for their careful reading.

\smallskip

\clearpage
\bibliographystyle{plainnat}
\bibliography{rightjump}
\bigskip 
\end{document}